\newcommand{\var}[1]{\text{\lstinline+#1+}}
\newtheorem{axiom}{Axiom}
\newcommand{\bv}{\ensuremath{{\boldsymbol v}}}   
\newcommand{\bx}{\ensuremath{{\boldsymbol x}}}
\newcommand{\set}[1]{\left\{ #1 \right\}}
\DeclareMathOperator{\angslip}{angslip}
\DeclareMathOperator{\linslip}{linslip}
\DeclareMathOperator{\divloss}{divloss}
\DeclareMathOperator{\capital}{cap}
\DeclareMathOperator{\numcap}{numcap}
\DeclareMathOperator{\load}{load}
\DeclareMathOperator{\betaDist}{beta}
\newcommand{\figlabel}[1]{\label{figure:#1}}
\newcommand{\nakedfigref}[1]{\ref{figure:#1}}
\newcommand{\figref}[1]{Figure~\nakedfigref{#1}}
\newcommand{\Figref}[1]{Figure~\nakedfigref{#1}}
\newcommand{\eqnlabel}[1]{\label{eq:#1}}
\newcommand{\nakedeqnref}[1]{\ref{eq:#1}}
\newcommand{\eqnref}[1]{Equation~\nakedeqnref{#1}}
\newcommand{\eqnrange}[2]{Equations~\nakedeqnref{#1}-\nakedeqnref{#2}}
\newcommand{\lemmalabel}[1]{\label{lemma:#1}}
\newcommand{\nakedlemmaref}[1]{\ref{lemma:#1}}
\newcommand{\lemmaref}[1]{Lemma~\nakedlemmaref{#1}}
\newcommand{\axiomlabel}[1]{\label{axiom:#1}}
\newcommand{\thmlabel}[1]{\label{thm:#1}}
\newcommand{\nakedthmref}[1]{\ref{thm:#1}}
\newcommand{\thmref}[1]{Theorem~\nakedthmref{#1}}
\newcommand{\seclabel}[1]{\label{sec:#1}}
\newcommand{\nakedsecref}[1]{\ref{sec:#1}}
\newcommand{\secref}[1]{Section~\nakedsecref{#1}}
\newcommand{\Reals}{\ensuremath{\mathbb{R}}}
\newcommand{\PosReals}{\ensuremath{\mathbb{R}_{>0}}}
\title{Presentation and Publication:
Loss and Slippage in Networks of Automated Market Makers}
\author{Daniel Engel}
{Computer Science Department, Brown University, Providence, RI USA}
{daniel\_engel1@brown.edu}
{}
{This research was supported by NSF grant 1917990}
\author{Maurice Herlihy}
{Computer Science Department, Brown University, Providence, RI USA}
{maurice.herlihy@gmail.com}
{https://orcid.org/0000-0002-3059-8926}
{This research was supported by NSF grant 1917990}
\authorrunning{D. Engel and M. Herlihy} 
\begin{document}


\maketitle

\begin{abstract}
Automated market makers (AMMs) are smart contracts that automatically trade electronic assets
according to a mathematical formula.
This paper investigates how an AMM's formula affects the interests
of liquidity providers, who endow the AMM with assets,
and traders, who exchange one asset for another at the AMM's rates.
\emph{Linear slippage} measures how a trade's size affects the trader's return,
\emph{angular slippage} measures how a trade's size affects the subsequent market price,
\emph{divergence loss} measures the opportunity cost of providers' investments,
and \emph{load} balances the costs to traders and providers.
We give formal definitions for these costs,
show that they obey certain conservation laws:
these costs can be shifted around but never fully eliminated.
We analyze how these costs behave under \emph{composition},
when simple individual AMMs are linked to form more complex networks of AMMs.

\end{abstract}

\section{Introduction}
\seclabel{intro}
An \emph{automated market maker} (AMM) is an automaton that trades electronic assets according to a fixed formula.
Unlike traditional ``order book'' traders,
AMMs have custody of their own asset pools,
so they can trade directly with clients,
and do not need to match up (and wait for) compatible buyers and sellers.
Today,
AMMs such as Uniswap~\cite{AngerisKCNC2019}, Bancor~\cite{bancor}, and others have become one
of the most popular ways to trade electronic assets
such as cryptocurrencies, electronic securities, or tokens.
An AMM is typically implemented as a smart contract on a blockchain such as Ethereum~\cite{ethereum}.
Like circuit elements,
AMMs can be \emph{composed} into networks.
They can be composed sequentially,
where the output of one AMM's trade is fed to another,
and they can be composed in parallel,
where a trade is split between two AMMs with different formulas.
Compositions of AMMs can themselves be treated as AMMs~\cite{EngelH2021}.
This paper makes the following contributions.
AMMs have well-known inherent costs.
One such cost is \emph{slippage},
where a large trade increases the price of the asset being purchased,
both for the trader making the trade, and for later traders.
We give two alternative mathematical definitions of slippage
expressed directly in terms of the AMM's formula:
\emph{linear slippage} focuses on the buyer's price difference,
and \emph{angular slippage} characterizes how that buyer affects prices for later buyers.

Another cost is \emph{divergence loss} (sometimes called \emph{impermanent loss}),
where the value of the liquidity providers' investments end up worth less than if
the invested assets had been left untouched.
We give a precise mathematical definition of divergence loss,
expressed in directly terms of the AMM's formula,

We introduce a new figure of merit, called \emph{load},
that measures how costs are balanced among parties on different sides of a trade. 

We show how
these costs can be analyzed in either worst-case or in expectation.
We identify various \emph{conservation laws} that govern these costs:
they can be shifted, but never fully eliminated.
We characterize how these costs behave under sequential and parallel \emph{composition},
showing how to compute these costs for networks of AMMs,
not just individual AMMs.
Finally, we propose novel AMM designs capable of adapting to
changes in these costs.

The paper is organized as follows.
\secref{definitions} describes our model and terminology.
\secref{properties} introduces our cost measures and their conservation laws.
\secref{sequential} shows how these measure are affected
by sequential composition,
where the output of one AMM becomes the input of another AMM.
\secref{sequential} shows how these measure are affected
by parallel composition,
where traders split their trade between two AMMs that trade the same assets but according to different formulas.
\secref{oracle} surveys some simple adaptive strategies that
can mitigate the costs' conservation laws.
\secref{related} surveys related work.

Some of our numbered equations require long, mostly routine derivations which have been moved to the appendix to save space.
A few of the longer proofs have also been moved to the appendix.

\section{Definitions}
\seclabel{definitions}
We use bold face for vectors $(\bx)$ and italics for scalars ($x$).
Variables, scalar or vector,
are usually taken from the end of the alphabet ($x,y,z$),
and constants from the beginning ($a,b,c$).
We use ``:='' for definitions and ``='' for equality.
A function $f: \Reals \to \Reals$ is \emph{strictly convex}
if for all $t \in (0,1)$ and distinct $\bx,\bx' \in \Reals$,
$f(t \bx + (1-t) \bx') < t f(\bx) + (1-t) f(\bx')$.
Any tangent line for a strictly convex function lies below its curve.

Here is an informal example of a \emph{constant-product} AMM~\cite{uniswapv1}.
An AMM in state $(x,y)$ has custody $x$ units of asset $X$,
and $y$ units of asset $Y$,
subject to the invariant that the product $x y = c$,
for $x,y > 0$ and some constant $c > 0$.
The AMM's states thus lie on the hyperbolic curve $x y = c$.
If a trader transfers $\delta_X$ units of $X$ to the AMM,
the AMM will return $\delta_Y$ units of $Y$ to the trader,
where $\delta_Y$ is chosen to preserve the invariant $(x+\delta_X)(y-\delta_Y) = c$.

Formally,
the state of an AMM that trades assets $X$ and $Y$ is given by
a pair $(x,y) \in\PosReals^2$,
where $x$ is the number of $X$ units in the AMM's pool,
and $y$ the number of $Y$ units.
The state space is given by a curve $(x,f(x))$, where $f: \PosReals \to \PosReals$.
Except when noted,
the AMMs considered here satisfy the boundary
conditions $\lim_{x \to 0}f(x) = \infty$ and $\lim_{x \to \infty}f(x) = 0$, meaning that traders cannot exhaust either pool of assets.
The function $f$ is subject to further restrictions discussed later.

There are two kinds of participants in decentralized finance.
(1) \emph{Traders} transfer assets to AMMs, and receive assets back.
Traders can compose AMMs into networks to conduct more complicated
trades involving multiple kinds of assets.
(2) \emph{Liquidity providers} (or ``providers'') fund the AMMs by
lending assets, and receiving shares, fees, or other profits.
Traders and providers play a kind of alternating game:
traders modify AMM states by trading one asset for another,
and providers can respond by adding or removing assets,
reinvesting fees, or adjusting other AMM properties.

AMM typically charge fees for trades.
For example,
Uniswap v1 diverts 0.3\% of the assets returned by each trade
back into that asset's pool.
Although there is no formal difficulty including fees in our analysis,
we neglect them here because
they have little impact on costs:
fees slightly reduce both slippage costs for traders
and divergence loss for providers.

A \emph{valuation} $v \in (0,1)$ assigns relative values to
an AMM's assets: $v$ units of $X$ are deemed worth $(1-v)$ units of $Y$.
At valuation $v$,
a trader who moves an AMM from state $(x,f(x))$ to $(x',f(x'))$
makes a profit if $v(x-x')+(1-v)(f(x)-f(x'))$ is positive,
and otherwise incurs a loss.
The trader's profit is maximal precisely when $v x' + (1-v)f(x')$ is minimal.
We assume that at any time,
there is a single \emph{market valuation} accepted by most traders.
An \emph{arbitrage trade} is one in which a trader makes a profit
by moving an AMM from a state reflecting a prior market valuation to
a distinct state reflecting the current market valuation.

A \emph{stable point} for an AMM $A$ and valuation $v$
is a point $(x,f(x))$ that minimizes $v x + (1-v)f(x)$.
If $v$ is the market valuation,
then any trader can make an arbitrage profit
by moving the AMM from any state to a stable state,
and no trader can make a profit by moving the AMM out of a stable state.
Valuations, stable points, and exchange rates are related.
If $(x,f(x))$ is the stable point for valuation $v$, then
\begin{equation}
    \frac{d f(x)}{d x} = -\frac{v}{1-v}.
\end{equation}

Following Engel and Herlihy~\cite{EngelH2021},
we require the AMM function $f$ to satisfy certain reasonable properties,
expressed here as axioms.
A detailed discussion and justification for each axiom appears elsewhere~\cite{EngelH2021}.

Every AMM state should define a unique rate of exchange
between its assets,
and trades should change that rate gradually rather than abruptly.
\begin{axiom}[Continuity]
  \axiomlabel{continuity}
The function $f$ is strictly decreasing and (at least) twice-differentiable.
\end{axiom}

An AMM must be able to adapt to any market conditions.
The \emph{exchange rate} of asset $Y$ in units of $X$ at state $(x,f(x))$
is $-f'(x)$, the negative of the curve's slope at that point.
\begin{axiom}[Expressivity]
  \axiomlabel{expressivity}
  The exchange rate $-f'(x)$ can assume every value
  in the open interval $(0,\infty)$.
\end{axiom}

Slippage should work to the disadvantage of the trader.
To prevent runaway trading,
buying more of asset $X$ should make $X$ more expensive, not less.
\begin{axiom}[Convexity]
\axiomlabel{convexity}
For every AMM $A:=(x,f(x))$, $f$ is strictly convex.
\end{axiom}

It can be shown~\cite{EngelH2021} that for any AMM satisfying these axioms,
every valuation has a unique stable state.
\begin{theorem}[Stability]
\thmlabel{stability}
The function
\begin{equation}
    \phi(v) = f^{\prime -1}\left(-\frac{v}{1-v}\right) 
\end{equation}
is a homeomorphism $\phi: (0,1) \to \PosReals$ that carries
each valuation $v \in (0,1)$
to the unique stable state $x$ that minimizes $v x + (1-v)f(x)$.
\end{theorem}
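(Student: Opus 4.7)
The plan is to show first that $\phi$ is a well-defined homeomorphism, and then verify separately that its image at $v$ is the unique minimizer of $h_v(x) := v x + (1-v) f(x)$. The argument will rest on turning each of the three axioms into a concrete regularity statement about $f'$.

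First I would establish that $-f' : \PosReals \to (0,\infty)$ is a homeomorphism. Continuity follows from Axiom~\ref{axiom:continuity} (twice differentiability). Strict monotonicity comes from Axiom~\ref{axiom:convexity}: strict convexity together with twice differentiability forces $f'$ to be strictly increasing, so $-f'$ is strictly decreasing. Surjectivity onto $(0,\infty)$ is exactly Axiom~\ref{axiom:expressivity}. A strictly monotone continuous surjection between intervals is a homeomorphism, and so is its inverse; the boundary behavior $-f'(x)\to\infty$ as $x\to 0^+$ and $-f'(x)\to 0$ as $x\to\infty$ is pinned down automatically by monotonicity plus the image being $(0,\infty)$.

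Next, consider the auxiliary map $g:(0,1)\to(0,\infty)$ given by $g(v) = v/(1-v)$. This is smooth and strictly increasing, with $g(0^+)=0$ and $g(1^-)=\infty$, hence a homeomorphism. Writing
\begin{equation}
\phi(v) = (-f')^{-1}\!\bigl(g(v)\bigr) = f'^{-1}\!\left(-\tfrac{v}{1-v}\right),
\end{equation}
we exhibit $\phi$ as a composition of two homeomorphisms, so $\phi:(0,1)\to\PosReals$ is itself a homeomorphism.

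Finally I would show $\phi(v)$ is the unique minimizer of $h_v$. Since $v>0$ and $1-v>0$, the function $h_v$ is strictly convex (linear plus a strictly convex term with positive coefficient), so any critical point is the unique global minimum. Differentiating, $h_v'(x) = v + (1-v)f'(x)$, which vanishes iff $f'(x) = -v/(1-v)$, i.e.\ iff $x=\phi(v)$. This identifies $\phi(v)$ as the unique stable state for valuation $v$. There is no real obstacle here; the only subtlety is making sure that each axiom is used exactly where it is needed (continuity for continuity of $f'$, convexity for strict monotonicity of $f'$ and for strict convexity of $h_v$, expressivity for surjectivity), which together upgrade the first-order condition to a genuine homeomorphism between valuations and stable states.
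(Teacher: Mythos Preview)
Your argument is correct and complete: you use the three axioms exactly where they belong (continuity of $f'$, strict monotonicity of $f'$, surjectivity onto $(0,\infty)$), compose with the evident homeomorphism $v\mapsto v/(1-v)$, and then invoke strict convexity of $h_v$ to upgrade the first-order condition to a unique global minimum. The paper itself does not supply a proof of this theorem; it merely states the result and cites the earlier paper~\cite{EngelH2021} for the argument, so there is nothing to compare against here beyond noting that your proof is the natural one and almost certainly what the cited reference contains.
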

For example, the stable state map for the constant-product AMM $(x,1/x)$
is $\phi(v) = \sqrt{\frac{(1-v)}{v}}$.
Sometimes it is convenient to express $\phi$ in vector form as
$\Phi: (0,1) \to \PosReals^2$,
where $\Phi(v) := (\phi(v),f(\phi(v)))$.
We will sometimes use $\psi: \PosReals \to (0,1)$,
the inverse function of $\phi$:
\begin{equation}
\psi(x) = -\frac{f'(x)}{1-f'(x)}.
\end{equation}
The vector form is $\Psi(x) := (\psi(x),1-\psi(x))$.

Most of the properties of interest in this paper can be expressed either in the asset domain,
as functions of $x$ and $f(x)$,
or in the valuation domain, as functions of $v$ and $1-v$.

These definitions extend naturally to AMMs that trade more than two assets.
Many (but not all) if the results presented below also generalize,
but for brevity we focus on AMMs that trade between two assets.

\section{Properties of Interest}
\seclabel{properties}

\subsection{AMM Capitalization}
\seclabel{capital}
\begin{figure}
\centering
  \includegraphics[width=0.5 \hsize]{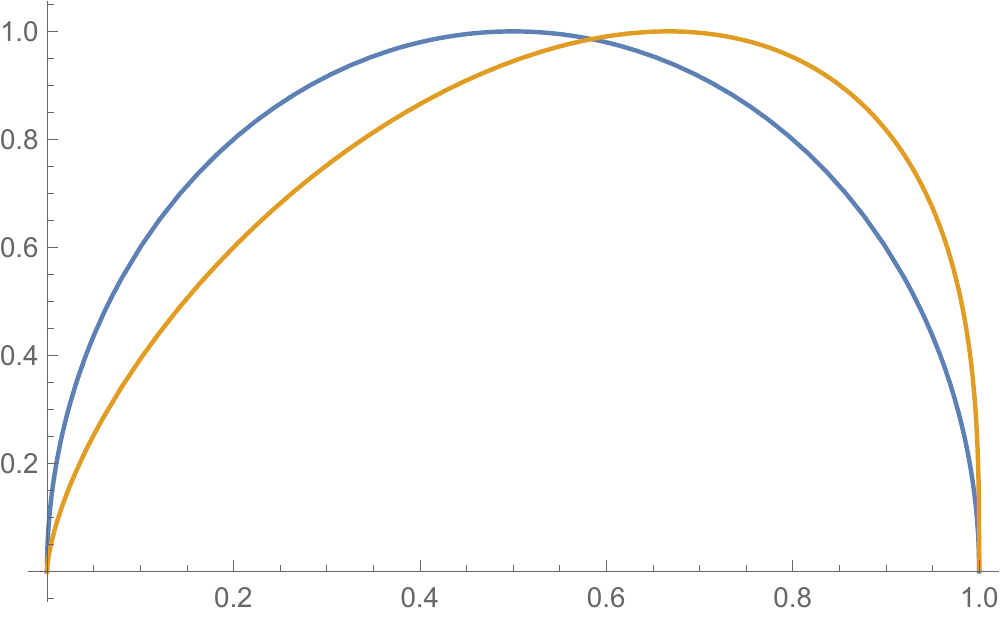}
  \caption{Balanced Capitalization of AMMs $(x,1/x)$ (symmetric curve, blue) and $(x,1/x^2)$ (asymmetric curve, yellow) at stable points}
  \figlabel{cap}
\end{figure}
Let $(v,1-v)$ be a valuation with stable point $(x,f(x))$. 
What is a useful way to define the \emph{capitalization} (total value) of an AMM's holdings?
It may be appealing to pick one asset to act as \emph{num\'eraire},
computing the AMM's capitalization at point $(x,f(x))$ in terms of that asset alone:
\begin{equation*}
  \numcap_X(v;A) := \phi(v) + \left(\frac{1-v}{v}\right) f(\phi(v)).
\end{equation*}
For example,
the num\'eraire capitalization of the constant-product AMM 
at the stable point for valuaton $v$ is
$2\sqrt{\frac{1}{v}-1}$.

Unfortunately,
this notion of capitalization can lead to counter-intuitive results
if the num\'eraire asset becomes volatile.
As the value of the num\'eraire  $X$ tends toward zero,
``bad money drives out good'',
and arbitrage traders will replace more valuable $Y$ with less valuable $X$.
As the AMM fills up with increasingly worthless units of $X$,
its num\'eraire capitalization grows without bound,
so an AMM whose holdings have become worthless has infinite num\'eraire capitalization.

A more robust approach is to choose a formula that balances the asset classes in proportion to their relative valuations.
Define the (balanced) \emph{capitalization} at $(x,f(x))$ to be the sum of the two asset pools weighted by their relative values.
Let $\bv = (v,1-v)$ and $\bx = (x,f(x))$.
\begin{equation}
  \capital(x,v; A) := v x + (1-v) f(x)  = \bv \cdot \bx.
\end{equation}
If $v$ is the current market valuation,
then $A$ will usually be in the corresponding stable state $\Phi(v) = (\phi(v),f(\phi(v))$, yielding
\begin{equation}
  \capital(v; A) := \capital(\phi(v),v; A) = \bv \cdot \Phi(v).
\end{equation}

For example, AMMs $A:=(x,1/x)$ and $B:=(x,1/x^2)$
have capitalization at their stable points:
\begin{align*}
  \capital(v; (x,1/x)) &= 2\sqrt{v(1-v)}, \\
  \capital(v; (x,1/x^2)) &= \frac{3 v\sqrt[3]{\frac{1}{v}-1}}{2^{2/3}}.
\end{align*}
See \figref{cap}.
Both have minimum capitalization 0 at $v=0$ and $v=1$,
where one asset is worthless,
and maximal capitalizations at respective valuations $v=1/2$ and $v= 2/3$.

\begin{theorem}
\thmlabel{max-cap}
An AMM's capitalization is maximal at the fixed-point $\phi(v) = f(\phi(v))$, where the amounts of $X$ and $Y$ are equal.
\end{theorem}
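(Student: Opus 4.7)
The plan is to work with $C(v) := \capital(v;A) = v\phi(v) + (1-v)\,f(\phi(v))$ as a function of $v \in (0,1)$, differentiate it, and exploit \Thmref{stability} to collapse most of the derivative.

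First I would compute
\[
C'(v) \;=\; \phi(v) - f(\phi(v)) \;+\; \phi'(v)\Bigl[\,v + (1-v)\,f'(\phi(v))\Bigr].
\]
The key observation is that the bracketed coefficient of $\phi'(v)$ vanishes identically: \Thmref{stability} says $f'(\phi(v)) = -v/(1-v)$, so $(1-v)f'(\phi(v)) = -v$, and the bracket is $v - v = 0$. This is exactly an envelope-theorem style cancellation reflecting the fact that $\phi(v)$ is defined as the minimizer of $v x + (1-v)f(x)$ over $x$; moving the stable point doesn't affect the objective to first order. What remains is simply
\[
C'(v) = \phi(v) - f(\phi(v)),
\]
so the first-order condition is precisely the fixed-point equation $\phi(v) = f(\phi(v))$ claimed in the theorem.

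Next I would argue that such a $v^\star$ exists uniquely and is a global maximum. By \Axiomref{convexity}, $f'$ is strictly increasing, so $\phi = (f')^{-1}\!\bigl(-v/(1-v)\bigr)$ is a strictly decreasing homeomorphism $(0,1)\to\PosReals$, with $\phi(v)\to\infty$ as $v\to 0$ and $\phi(v)\to 0$ as $v\to 1$ (using the boundary conditions on $f$ together with \Axiomref{expressivity}). Since $f$ is strictly decreasing (\Axiomref{continuity}), $f\circ\phi$ is therefore strictly increasing from $0$ to $\infty$. Consequently $\phi(v) - f(\phi(v))$ is strictly decreasing in $v$, positive near $v=0$ and negative near $v=1$, so it vanishes at a unique $v^\star$. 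The sign of $C'(v)$ flips from positive to negative there, making $v^\star$ the unique interior maximum of $C$.

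I expect the only subtlety to be the boundary/monotonicity bookkeeping for $\phi$ and $f\circ\phi$; the algebra itself is a one-line envelope computation. Once the cancellation of the $\phi'(v)$ term is in hand, the rest is a standard sign-change argument.
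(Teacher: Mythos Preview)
Your proof is correct and follows essentially the same route as the paper: both compute $C'(v)=\phi(v)-f(\phi(v))$ via the envelope-type cancellation $v+(1-v)f'(\phi(v))=0$, and then argue uniqueness of the critical point. The only cosmetic difference is that the paper establishes uniqueness by explicitly computing $C''(v)=\phi'(v)/(1-v)<0$ (using $\phi'(v)=-1/((1-v)^2 f''(\phi(v)))$ from strict convexity) to get strict concavity, whereas you obtain the same conclusion by directly observing that $\phi$ is strictly decreasing and $f\circ\phi$ strictly increasing, so $C'=\phi-f\circ\phi$ changes sign exactly once.
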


However, \Figref{cap} shows that the valuation that maximizes an AMM's capitalization is not necessarily $\frac{1}{2}$.
An AMM is \emph{symmetric} if $f = f^{-1}$.
(Both Uniswap and Curve use symmetric curves.)
\begin{lemma}
\lemmalabel{sym-curve}
The stable state map for any symmetric AMM satisfies $f(\phi(v)) = \phi(1-v)$
\end{lemma}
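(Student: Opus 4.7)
The plan is to combine the slope characterization of stable states from \thmref{stability} with the chain-rule consequence of $f$ being its own inverse.

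First, I would differentiate the functional identity $f(f(x)) = x$, which holds because $f = f^{-1}$, to obtain $f'(f(x)) \cdot f'(x) = 1$, and hence $f'(f(x)) = 1/f'(x)$. The needed differentiability of $f$ and of $f \circ f$ is guaranteed by \axiomref{continuity}.

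Next, I would specialize to $x = \phi(v)$. By \thmref{stability}, $f'(\phi(v)) = -v/(1-v)$, so the identity above yields
\[
f'\bigl(f(\phi(v))\bigr) = -\frac{1-v}{v}.
\]
This is exactly the slope that, by \thmref{stability} applied at valuation $1-v$, characterizes $\phi(1-v)$; the characterization is unique because strict convexity (\axiomref{convexity}) makes $f'$ strictly monotone. Therefore $f(\phi(v)) = \phi(1-v)$.

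The main thing to check is not really an obstacle but a bookkeeping point: that $f(\phi(v))$ lies in the domain on which $\phi^{-1}$ is defined, so that the comparison with $\phi(1-v)$ is legitimate. This is immediate from $\phi\colon (0,1) \to \PosReals$ being a homeomorphism and from $f\colon \PosReals \to \PosReals$ mapping into $\PosReals$. Beyond this, the argument is essentially a one-line application of the involution identity combined with \thmref{stability}.
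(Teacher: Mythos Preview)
Your proposal is correct and is essentially the same argument as the paper's: both compute $f'(f(\phi(v)))$ via the involution/inverse-derivative identity, find it equals $-(1-v)/v$, and then invoke the stable-state characterization of \thmref{stability} to conclude $f(\phi(v))=\phi(1-v)$. The only cosmetic difference is that the paper phrases the derivative step using the inverse-function formula $g'(y)=1/f'(g(y))$ with $g=f^{-1}=f$, whereas you differentiate $f(f(x))=x$ directly; these are the same computation.
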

\begin{theorem}
\thmlabel{sym-max-cap}
    Any symmetric AMM has maximum capitalization at $\bv = (\frac{1}{2},\frac{1}{2})$.
\end{theorem}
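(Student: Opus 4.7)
The plan is to chain together the three ingredients already established: Theorem \ref{thm:max-cap} locates the maximum at the fixed point of $f \circ \phi$, Lemma \ref{lemma:sym-curve} rewrites that fixed-point condition using the symmetry of $f$, and Theorem \ref{thm:stability} supplies the injectivity needed to extract $v$ from the resulting equation.

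Concretely, I would start by invoking Theorem \ref{thm:max-cap} to say that capitalization is maximized at the valuation $v^\star$ satisfying $\phi(v^\star) = f(\phi(v^\star))$. Next, since the AMM is symmetric, Lemma \ref{lemma:sym-curve} gives $f(\phi(v^\star)) = \phi(1 - v^\star)$. Substituting, the fixed-point condition becomes
\begin{equation*}
\phi(v^\star) = \phi(1 - v^\star).
\end{equation*}
By Theorem \ref{thm:stability}, $\phi : (0,1) \to \PosReals$ is a homeomorphism, hence in particular injective, so this forces $v^\star = 1 - v^\star$, i.e.\ $v^\star = \tfrac{1}{2}$. Therefore the maximum capitalization is attained at $\bv = (\tfrac{1}{2},\tfrac{1}{2})$.

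The only step that requires any care at all is confirming that the ``fixed point'' singled out by Theorem \ref{thm:max-cap} is unique, so that identifying a single solution of $\phi(v) = \phi(1-v)$ really pins down the maximizer. This, however, is immediate from the homeomorphism property of $\phi$: the equation $\phi(v) = \phi(1-v)$ has the unique solution $v = 1/2$ in $(0,1)$. There is no real obstacle; the proof is essentially a one-line composition of the preceding results, and I would present it in at most a few lines.
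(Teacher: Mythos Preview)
Your proposal is correct and follows essentially the same route as the paper: invoke \thmref{max-cap} for the fixed-point condition, apply \lemmaref{sym-curve} to rewrite it as $\phi(v^\star)=\phi(1-v^\star)$, then use injectivity of $\phi$ to conclude $v^\star=\tfrac12$. The only cosmetic difference is that the paper cites $\phi'<0$ (established inside the proof of \thmref{max-cap}) for injectivity, whereas you cite the homeomorphism property from \thmref{stability}; either is fine.
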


\subsection{Divergence Loss}
Consider the following simple game.
A liquidity provider funds an AMM $A$,
leaving it in the stable state $\bx$ for the current market valuation $\bv$.
Suppose the market valuation changes from $\bv$ to $\bv'$,
and a trader submits an arbitrage trade that would take $A$ to the stable state $\bx'$ for the new valuation.
The provider has a choice:
(1) immediately withdraw its liquidity from $A$ instead of accepting the trade,
or (2) accept the trade.
It is not hard to see that
the provider should always choose to withdraw.
The shift to the new valuation changes the AMM's capitalization.
If $\bx$ is the stable state for $\bv$,
then it cannot be stable for $\bv'$,
so the difference between the capitalizations $\bv' \cdot \bx - \bv' \cdot \bx'$ must be positive,
so the arbitrage trader would profit at the provider's expense.
(In practice, a provider would take into account the value of current and future fees in making this decision.)

Define the \emph{divergence loss} for an AMM $A = (x,f(x))$ to be
\begin{align*}
  \divloss(v,v';A)
  &:= \bv' \cdot \Phi(v) - \bv' \cdot \Phi(v') \\
  &=  v' \phi(v) + (1-v')f(\phi(v)) - (v' \phi(v') + (1-v')f(\phi(v')))
\end{align*}
where $\Phi(v,1-v) = (\phi(v),f(\phi(v)))$.
Sometimes it is useful to express divergence loss in the
trade domain, as a function of liquidity pool size instead of valuation:
\begin{align*}
  \divloss^*(x,x';A)
  &:= \divloss(\psi(x),\psi(x');A)\\
  &= \Psi(x') \cdot \bx - \Psi(x') \cdot \bx' \\
  &= \psi(x') x + (1-\psi(x'))f(x) - (\psi(x') x' + (1-\psi(x'))f(x').
\end{align*}
Informally, divergence loss measures the difference in value between funds invested in an AMM and funds left in a wallet.
Recall that by definition $\bv'$ minimizes $\Phi(v') \cdot \bv'$,
so divergence loss is always positive when $\bv \neq \bv'$.

Is it possible to bound divergence loss by bounding trade size?
More precisely,
can an AMM guarantee that any trade that adds $\delta$ or fewer units of $X$
incurs a divergence loss less than some $\epsilon$, for positive constants $\delta,\epsilon$?

Unfortunately, no.
There is a strong sense in which divergence loss can be shifted, but never eliminated.
For example, for the constant-product AMM $A:=(x,1/x)$,
the divergence loss for a trade of size $\delta$ is
\begin{equation}
\divloss^*(x,x+\delta; A) = \frac{\delta ^2}{2 \delta  x^2+x^3+\delta ^2 x+x}
\end{equation}
Holding $\delta$ constant and letting $x$ approach 0,
the divergence loss for even constant-sized trades grows without bound.
This property holds for all AMMs.
\begin{theorem}
\thmlabel{divloss-conserve}
No AMM can bound divergence loss even for bounded-size trades.
\end{theorem}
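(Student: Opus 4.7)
The plan is to fix arbitrary $\delta > 0$ and $M > 0$, then exhibit an AMM state $(x, f(x))$ such that the trade taking $A$ to $(x+\delta, f(x+\delta))$ incurs divergence loss greater than $M$. The engine of the argument is the boundary condition $\lim_{x \to 0^+} f(x) = \infty$ from \secref{definitions}: sending $x \to 0^+$ keeps the $X$-input fixed at $\delta$ while forcing $f(x)$ (and hence the loss) to diverge, exactly as in the constant-product calculation already displayed in the text.

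First I would recast $\divloss^*$ in a geometrically transparent form. Writing $g(x) := -f'(x)$, the identities $\psi(x') = g(x')/(1+g(x'))$ and $1-\psi(x') = 1/(1+g(x'))$ substituted into the definition of $\divloss^*$ give, after a routine simplification,
\begin{equation*}
\divloss^*(x, x+\delta; A) \;=\; \frac{f(x) - f(x+\delta) - g(x+\delta)\,\delta}{1 + g(x+\delta)}.
\end{equation*}
The numerator $D(x) := f(x) - f(x+\delta) - g(x+\delta)\,\delta$ is exactly the vertical gap at abscissa $x$ between the curve $y = f(\cdot)$ and its tangent line at $x+\delta$; \axiomref{convexity} makes $D(x) > 0$.

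The finish is a clean one-variable limit. Holding $\delta$ fixed and sending $x \to 0^+$, the boundary condition gives $f(x) \to \infty$, while \axiomref{continuity} (continuity of $f$ and $f'$) gives $f(x+\delta) \to f(\delta)$ and $g(x+\delta) \to g(\delta)$, both finite. Hence $D(x) \to \infty$ while the denominator stays bounded, so $\divloss^*(x, x+\delta; A) \to \infty$ and eventually exceeds $M$, proving the theorem.

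The only real obstacle I anticipate is definitional. In this construction the $Y$-output $f(x) - f(x+\delta)$ itself grows without bound as $x \to 0^+$, so the trade is ``bounded'' only in its $X$-input coordinate. That matches the constant-product example already spelled out in the text (where $\delta$ is held fixed while $x \to 0$), so I would adopt the same reading of ``bounded-size trade''; under a stronger notion demanding both input and output simultaneously bounded, one would need a separate argument and possibly a revised statement.
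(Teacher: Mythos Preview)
Your proposal is correct and follows essentially the same approach as the paper: fix $\delta$, write out $\divloss^*(x,x+\delta;A)$, send $x \to 0^+$, and observe that the boundary condition $f(x)\to\infty$ forces the expression to blow up while all remaining terms (involving $\psi(x+\delta)$, $f(x+\delta)$, $f'(x+\delta)$) stay finite. Your algebraic rearrangement into the single fraction with the tangent-gap numerator is a slightly cleaner packaging of the same limit computation, and your closing remark about the meaning of ``bounded-size'' correctly anticipates the paper's implicit convention that only the $X$-input $\delta$ is held fixed.
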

  
\begin{proof}
For AMM $A:=(x,f(x))$,
\begin{multline*}
\divloss^*(x,x+\delta; A) = \psi(x+\delta) x + (1-\psi(x+\delta)) f(x) \\
- (\psi(x+\delta) (x+\delta) + (1-\psi(x+\delta)) f(x+\delta)).
\end{multline*}
Note that $\lim_{x \to 0} (1-\psi(x+\delta)) > 0$,
and $\lim_{x \to 0} f(x) = \infty$.
All other terms have finite limits,
so $\lim_{x \to 0}\divloss^*(x,x+\delta; A) = \infty.$
\end{proof}

What is a provider's worst-case exposure to divergence loss?
Consider an AMM $A:=(x,f(x))$ in state $(a,f(a))$.
As the $X$ asset becomes increasingly worthless,
the valuation $v'$ approaches $(0,1)$
as its stable state approaches $(\infty,0)$.
\begin{equation*}
  \lim_{v' \to 0} \divloss(v,v'; A)
    = (v',1-v') \cdot (a,f(a))
    - (v',1-v') \cdot (\phi(v'),f(\phi(v')))) = f(a).
\end{equation*}
Symmetrically, if the $Y$ asset becomes worthless,
\begin{equation*}
  \lim_{v' \to 1} \divloss(v,v'; A)
    = (v',1-v') \cdot (a,f(a))
    - (v',1-v') \cdot (\phi(v'),f(\phi(v')))) =a.
\end{equation*}
The provider's worst-case exposure to divergence loss in state $(a,f(a))$ is thus $\max(a,f(a))$.
The \emph{minimum} worst-case exposure occurs when $a = f(a)$.
Recall from \secref{capital} that this fixed-point is exactly the state that maximizes
the AMM's capitalization.
(See Appendix \secref{min-divcap} for a more formal treatment of this claim.)

Divergence loss is sometimes called \emph{impermanent loss},
because the loss vanishes if the assets return to their original valuation.
The inevitability of impermanent loss does not mean that an AMM's capitalization cannot increase,
only that there is always an opportunity cost to the provider for not cashing in earlier.

\subsection{Linear Slippage}
Linear slippage measures how increasing the size of a trade diminishes that trader's rate of return.
Let $A := (x,f(x))$ be an AMM in stable state $(a,f(a))$ for valuation $v$.
Suppose a trader sends $\delta >0$ units of $X$ to $A$,
taking $A$ from $(a,f(a))$ to $(a+\delta,f(a+\delta))$,
the stable state for valuation $v' < v$.
If the rate of exchange were linear,
the trader would receive $-\delta f'(a)$ units of $Y$ in return for $\delta$ units of $X$.
In fact, the trader receives only $f(a) - f(a+\delta)$ units,
for a difference of $-\delta f'(a) - f(a) + f(a+\delta)$.

The \emph{linear slippage}(with respect to $X$) is the value of this difference:
\begin{equation}
\eqnlabel{linslipx}
  \linslip_X(v,v'; A) = \left(\frac{1-v'}{1-v}\right)\left( \bv \cdot \Phi(v') - \bv \cdot \Phi(v) \right)
\end{equation}
In the trade domain, $\linslip^*_X(x,x'; A) = \linslip_X(\psi(x),\psi(x'); A)$.
Linear slippage with respect to $Y$ is defined symmetrically.

For example,
the linear slippage for the constant-product AMM $A:=(x,1/x)$
for a trade of size $\delta$ is
\begin{equation}
\linslip^*_X(x,x+\delta; A) = -\frac{\delta ^2 (\delta +x)}{x^2 \left(\delta ^2+x^2+2 \delta  x+1\right)}
\end{equation}
Just as for divergence loss,
holding $\delta$ constant and letting $x$ approach 0,
the linear slippage across constant-sized trades grows without bound.
\begin{theorem}
No AMM can bound linear slippage for bounded-size trades.
\end{theorem}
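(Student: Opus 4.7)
The plan is to follow the template of the proof of \Thmref{divloss-conserve}: fix a bounded trade size $\delta > 0$, pick a state $(x,f(x))$ near the lower boundary, and let $x \to 0^+$ to push linear slippage past any prescribed bound.

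The discussion preceding \eqnref{linslipx} identifies the intuitive linear slippage of a size-$\delta$ $X$-for-$Y$ trade at state $(x,f(x))$ as
\[
L(x,\delta;A) := f(x+\delta) - f(x) - \delta f'(x),
\]
the gap between the tangent-line approximation at $x$ and the actual curve at $x+\delta$. By \Axiomref{convexity} and the Fundamental Theorem of Calculus,
\[
L(x,\delta;A) = \int_x^{x+\delta}\bigl(f'(t) - f'(x)\bigr)\,dt \geq 0,
\]
the integrand being non-negative because $f'$ is strictly increasing.

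To force divergence, I would restrict the integral to the right half of its domain and use monotonicity of $f'$ again:
\[
L(x,\delta;A) \geq \int_{x+\delta/2}^{x+\delta}\bigl(f'(t)-f'(x)\bigr)\,dt \geq \frac{\delta}{2}\bigl(f'(x+\delta/2) - f'(x)\bigr).
\]
Combining \Axiomref{convexity} and \Axiomref{expressivity}, $f'$ is a strictly increasing bijection from $\PosReals$ onto $(-\infty,0)$, so $f'(x) \to -\infty$ as $x \to 0^+$ while $f'(x+\delta/2) \to f'(\delta/2)$, a finite value. The lower bound, hence $L(x,\delta;A)$ itself, diverges as $x\to 0^+$. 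A short calculation from \eqnref{linslipx} relates the paper's formal $\linslip^*_X(x,x+\delta;A)$ to $L(x,\delta;A)$ by a factor of $1/(1-f'(x+\delta))$, which stays bounded as $x \to 0^+$ because $f'(x+\delta) \to f'(\delta)$ is finite; so the same divergence conclusion transfers to $\linslip^*_X$.

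The main obstacle will be handling the indeterminate form $\infty - \infty$ hidden inside $L$: as $x \to 0^+$, both $-\delta f'(x)$ and $-f(x)$ blow up with opposite signs while $f(x+\delta)$ stays bounded, so termwise limits are inconclusive. Dropping the half of the FTC interval nearest $x$ is the key trick, isolating the divergent contribution of $-f'(x)$ and avoiding the compensating blow-up of $-f(x)$ that would otherwise cancel against $f(x+\delta)$.
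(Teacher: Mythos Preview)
Your argument is correct. The paper's own proof is a one-line assertion: ``As in the proof of \thmref{divloss-conserve}, the claim follows because $\lim_{x\to 0}\linslip^*_X(x,x+\delta;A)=\infty$.'' But the analogy to \thmref{divloss-conserve} is imperfect in exactly the way you identify. In the divergence-loss expression, a single term $(1-\psi(x+\delta))f(x)$ blows up while every other term stays finite, so the term-by-term limit is immediate. In $\linslip^*_X(x,x+\delta;A)=(1-\psi(x+\delta))\bigl(f(x+\delta)-f(x)-\delta f'(x)\bigr)$, both $-f(x)$ and $-\delta f'(x)$ diverge with opposite signs, and the paper does not say how to resolve the resulting $\infty-\infty$. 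Your integral-splitting trick---bounding $L(x,\delta;A)$ below by $\tfrac{\delta}{2}\bigl(f'(x+\delta/2)-f'(x)\bigr)$---cleanly isolates the divergent contribution of $-f'(x)$ from the compensating blow-up of $-f(x)$, and your observation that the prefactor $(1-\psi(x+\delta))=1/(1-f'(x+\delta))$ stays bounded away from zero as $x\to 0^+$ finishes the job. So your route is genuinely different and, on this point, more complete than the paper's.
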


\begin{proof}
As in the proof of \thmref{divloss-conserve},
the claim follows because 
$\lim_{x->0} \linslip^*_X(x,x+\delta; A) = \infty$.
\end{proof}

\subsection{Angular Slippage}
\begin{figure}[t]
\centering
  \includegraphics[width=0.5 \hsize]{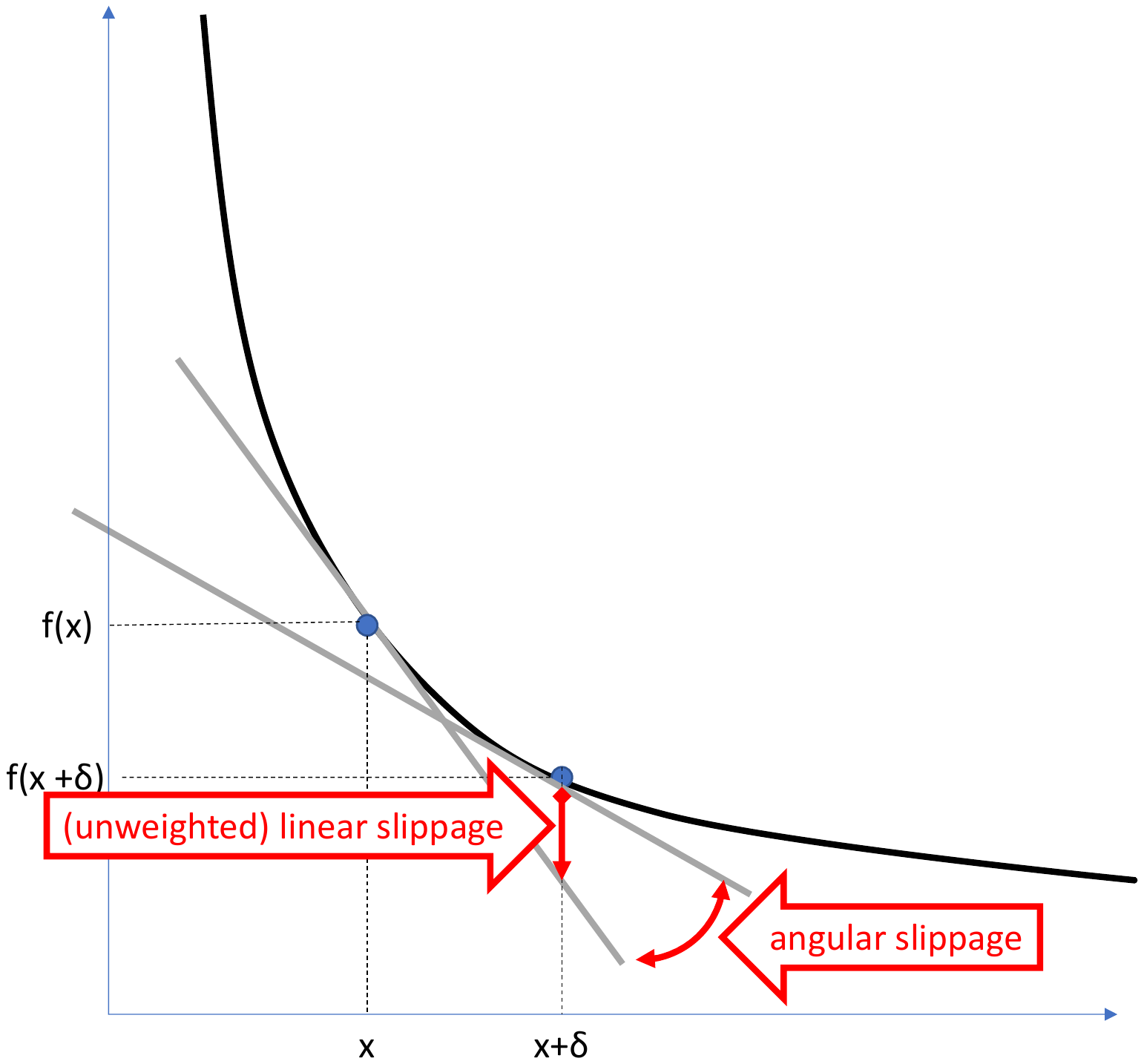}
  \caption{Angular vs linear slippage)}
  \figlabel{slip}
\end{figure}
Angular slippage measures how the size of a trade affects the exchange rate between the two assets.
This measure focuses on how a trade affects the traders who come after.
Recall that the (instantaneous) exchange rate in state $(x,f(x))$ is the slope of the tangent $f'(x)$.
Let $\theta(x)$ denote the angle of that tangent with the $X$-axis.
(We could equally well use the tangent's angle with the $Y$-axis.)
A convenient way to measure the change in price is to measure the change in angle.
Consider a trade that carries $A$ from 
valuation $v$ with stable point $(x,f(x))$,
to valuation $v'$ with stable point $(x',f(x'))$.
Define the \emph{angular slippage} of that trade
to be the difference in tangent angles at $x$ and $x'$
(expressed in the valuation domain):
\begin{equation*}
\angslip(v,v';A) = \theta(\phi(v'))-\theta(\phi(v)).  
\end{equation*}
In the trade domain,
$\angslip^*(x,x'; A) := \angslip(\phi(x),\phi(x');A)$.
Angular slippage is \emph{additive}:
for distinct valuations $v < v' < v''$,
$
\angslip(v,v'';A) = \angslip(v,v';A) + \angslip(v',v'';A).
$
Note that linear slippage is not additive.

Angular slippage and linear slippage are different ways
of measuring the same underlying phenomenon:
their relation is illustrated in \figref{slip}.

Here is how to compute angular slippage.
By definition,$ \tan \theta(x) = -\frac{1}{f'(x)}$.
Let $x' > x$, $\bv = (v,1-v)$, and $\bv' = (v',1-v')$.
\begin{equation}
\eqnlabel{angslip}
    \angslip(v,v';A) = \theta(x') - \theta(x) = \arctan\left(\frac{v-v'}{\bv \cdot \bv'}\right)
\end{equation}
The next lemma says says that the overall angular slippage,
$\angslip(0,\infty)$, is a constant independent of the AMM.
\begin{theorem}
\thmlabel{angular}
For every AMM $A$, $\angslip(0,\infty; A) = \pi/2$.
\end{theorem}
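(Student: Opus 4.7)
The plan is to exploit additivity together with the explicit formula $\tan\theta(x) = -1/f'(x)$, reducing the total angular slippage to the range of the tangent angle over the entire state space. Since angular slippage is additive and equals $\theta(x') - \theta(x)$ for a trade from state $x$ to state $x'$, the quantity $\angslip(0,\infty;A)$ should be interpreted as $\lim_{x' \to \infty}\theta(x') - \lim_{x \to 0^+}\theta(x)$, so the entire proof reduces to computing these two one-sided limits of $\theta$.

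First I would establish the boundary behavior of $f'$. By the Convexity axiom, $f'$ is strictly increasing; by Continuity it is continuous; and by Expressivity its range is all of $(-\infty, 0)$. A strictly increasing continuous function whose image is $(-\infty,0)$ on the domain $(0,\infty)$ must satisfy $\lim_{x \to 0^+} f'(x) = -\infty$ and $\lim_{x \to \infty} f'(x) = 0^-$. This is the one non-trivial input of the argument, but it is immediate from the axioms.

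Next I would plug these limits into $\tan\theta(x) = -1/f'(x)$. As $x \to 0^+$, the right-hand side tends to $0^+$, so $\theta(x) \to 0$; as $x \to \infty$, the right-hand side tends to $+\infty$, so $\theta(x) \to \pi/2$ (choosing the branch of $\arctan$ in $(0,\pi/2)$, which is forced by $f' < 0$ together with viewing $\theta$ as the angle the tangent makes with the positive $X$-axis measured into the upper half-plane). Subtracting gives $\angslip(0,\infty;A) = \pi/2 - 0 = \pi/2$, independent of $f$.

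The main obstacle, such as it is, is purely notational: the statement writes $\angslip(0,\infty;A)$ even though $\angslip$ was defined on valuations in $(0,1)$ and $\angslip^*$ on states in $\PosReals$. I would open the proof by clarifying that the two limiting conventions agree, since the homeomorphism $\phi:(0,1)\to\PosReals$ from \thmref{stability} sends $v \to 0$ to $x \to \infty$ and $v \to 1$ to $x \to 0$, so either interpretation yields the same quantity $\lim \theta(x') - \lim \theta(x)$, and the additivity remark preceding equation~\nakedeqnref{angslip} justifies treating the total slippage as this telescoping difference of boundary values.
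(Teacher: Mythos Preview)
Your proposal is correct and follows essentially the same approach as the paper: both arguments identify the boundary behavior $f'(x)\to -\infty$ as $x\to 0^+$ and $f'(x)\to 0^-$ as $x\to\infty$, then read off the total change in tangent angle. The only cosmetic difference is that the paper plugs the limiting valuations $v=1$, $v'=0$ directly into the closed-form $\arctan\bigl((v-v')/(\bv\cdot\bv')\bigr)$ from \eqnref{angslip}, whereas you compute $\theta$ at each endpoint and subtract; your treatment of the limits via the axioms and your remark on the valuation/state notational ambiguity are, if anything, more careful than the paper's own version.
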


\begin{proof}
    Consider an AMM $A:=(x,f(x))$.
    As 
     $\lim_{x \to 0} f'(x) = - \frac{v}{1-v} = -\infty$,
    implying $v = 1$.
    As
    $\lim_{x \to \infty} f'(x) = - \frac{v}{1-v} = 0$
    implying $v = 0$.
    Because $\tan \angslip(0,\infty; A) = \frac{1 - 0}{(0,1) \cdot (0,1)} = \infty$,
    $\angslip(0,\infty; A) = \frac{\pi}{2}$.
\end{proof}

The additive property means that no AMM can eliminate angular slippage over every finite interval.
Lowering angular slippage in one interval requires increasing it elsewhere.
\begin{corollary}
For any AMM $A$,
and any level of slippage $s$, $0 < s < \pi/2$,
there is an interval $(x_0,x_1) \subset \PosReals$ such that $\angslip^*(x_0,x_1; A) > s$.
\end{corollary}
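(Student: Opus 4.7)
The plan is to use additivity of angular slippage together with Theorem \nakedthmref{angular} to show that as the interval $(x_0,x_1)$ is enlarged toward the full positive real line, $\angslip^*(x_0,x_1;A)$ approaches $\pi/2$; since $s < \pi/2$, choosing the interval large enough will force it to exceed $s$.

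More concretely, I would first translate everything into the valuation domain via $\psi$, writing $v_0 := \psi(x_0)$ and $v_1 := \psi(x_1)$ with $v_0 > v_1$, so that $\angslip^*(x_0,x_1;A) = \angslip(v_0,v_1;A)$. By the additivity property stated just before Theorem \nakedthmref{angular}, and interpreting the endpoints $v = 1$ and $v = 0$ as the limits $x \to 0$ and $x \to \infty$, we have
\begin{equation*}
\angslip(1,v_0;A) + \angslip(v_0,v_1;A) + \angslip(v_1,0;A) = \angslip(1,0;A) = \pi/2,
\end{equation*}
where the equality $\angslip(1,0;A)=\pi/2$ is Theorem \nakedthmref{angular}. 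Rearranging,
\begin{equation*}
\angslip(v_0,v_1;A) = \pi/2 - \angslip(1,v_0;A) - \angslip(v_1,0;A).
\end{equation*}

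Next I would show that the two ``tail'' terms vanish in the appropriate limits. Using \eqnref{angslip}, $\angslip(1,v_0;A) = \arctan\bigl((1-v_0)/(v_0 + 0\cdot(1-v_0))\bigr) = \arctan((1-v_0)/v_0)$, which tends to $0$ as $v_0 \to 1$, i.e.\ as $x_0 \to 0$. Symmetrically, $\angslip(v_1,0;A) = \arctan(v_1/(1-v_1)) \to 0$ as $v_1 \to 0$, i.e.\ as $x_1 \to \infty$. Given any $s < \pi/2$, set $\epsilon := (\pi/2 - s)/2 > 0$ and pick $x_0$ small enough and $x_1$ large enough that each tail is less than $\epsilon$; then $\angslip^*(x_0,x_1;A) > \pi/2 - 2\epsilon = s$, which is exactly what we need.

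The main obstacle I anticipate is being rigorous about extending the additivity identity to the limiting endpoints $v = 0$ and $v = 1$ (the domain of $\angslip$ in the valuation form is the open interval $(0,1)$). I would handle this by treating $\angslip(1,v_0;A)$ and $\angslip(v_1,0;A)$ as the limits of $\angslip(v,v_0;A)$ as $v \to 1^-$ and of $\angslip(v_1,v;A)$ as $v \to 0^+$, respectively; continuity of $\arctan$ together with Axiom \nakedaxiomref{continuity} makes these limits well-defined, and the computation above then yields the tail bounds directly from \eqnref{angslip} without needing any further structural assumption on $f$.
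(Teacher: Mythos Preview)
Your proposal is correct and follows the paper's own (implicit) approach: the paper does not give a separate proof of the corollary, but introduces it with the sentence ``The additive property means that no AMM can eliminate angular slippage over every finite interval,'' i.e.\ exactly the additivity-plus-\thmref{angular} argument you spell out. Your handling of the endpoint issue via limits is also appropriate; alternatively, one can bypass it entirely by applying \eqnref{angslip} directly to $(v_0,v_1)$ and observing that $\arctan\bigl((v_0-v_1)/(\bv_0\cdot\bv_1')\bigr)\to\pi/2$ as $v_0\to 1$, $v_1\to 0$, which yields the same conclusion without invoking additivity at the boundary.
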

For example,
the Curve~\cite{curve} AMM advertises itself as having lower slippage than its competitors.
\thmref{angular} helps us understand this claim:
compared to a constant-product AMM,
Curve does have lower slippage than a constant-product AMM
for stable coins when they trade at near-parity,
but it must have higher slippage when the exchange rate wanders out of that interval.

\subsection{Load}
Divergence loss is a cost to providers,
and linear slippage is a cost to traders.
Controlling one without controlling the other is pointless
because AMMs function only if both providers and traders consider their costs acceptable.
We propose the following measure to balance provider-facing and
trader-facing costs.
The \emph{load} (with respect to $X$) across an interval is
the product of that interval's divergence loss and linear slippage:
\begin{equation}
\load_X(v,v';A) := \divloss(v,v'; A) \linslip_X(v,v'; A)
\end{equation}
Load can also be expressed in the trade domain:
$\load^*_X(x,x';A) := \divloss^*(x,x'; A) \linslip^*_X(x,x'; A)$.

\subsection{Expected Load}
\begin{figure}[t]
\centering
  \includegraphics[width=0.6 \hsize]{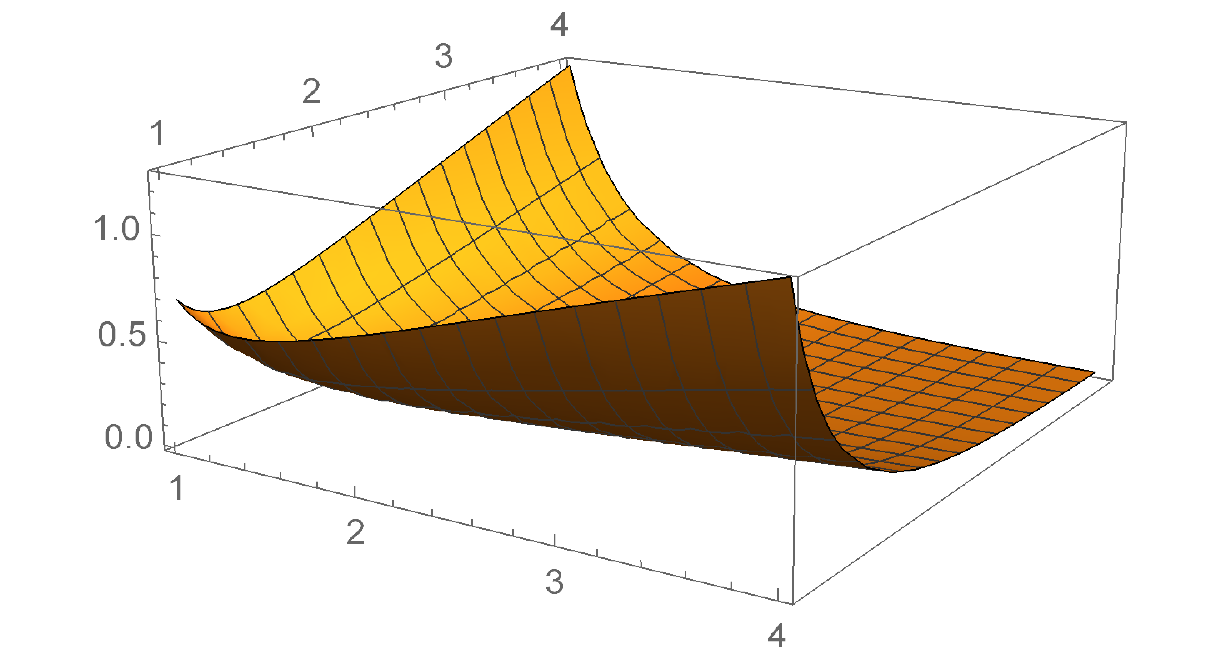}
  \caption{Expected load for $A:=(x,1/x)$ at 1/2 as a function of $\beta$ distribution parameters. (Mathematica source is shown in Appendix \secref{mathematica}.)}
  \figlabel{expload}
\end{figure}
We have seen that cost measures such as divergence loss,
linear slippage, angular slippage, and load
cannot be bounded in the worst case.
Nevertheless, these costs can be shifted.
Not all AMM states are equally likely.
For example, one would expect stablecoins to trade at near parity~\cite{curve}.

Suppose we are given a probability density for future valuations.
This distribution might be given \emph{a priori},
or it may be learned from historical data.
Can we compare the behavior of alternative AMMs given such a distribution?

Let $p(v)$ be the distribution over possible future valuations.
The expected load when trading $X$ for $Y$
starting in the stable state for valuation $\bv$ is
\begin{equation*}
  \int_0^{v}P(v'| v' < v)\load_X(v,v';A)dv' 
\end{equation*}
Weighting this expectation with the probability $P(v>v')$
that the trade will go in that direction yields
\begin{equation*}
  P(v>v') \int_0^{v}P(v'| v' < v)\load_X(v,v';A) dv' =
  \int_0^{v}p(v')\load_X(v,v';A) dv'.
\end{equation*}
Define the \emph{expected load} of AMM $A$ at valuation $v$ to be:
\begin{equation*}
  E_p[\load(v;A)] :=
  \int_0^{v}p(v')\load_X(v,v';A)dv' +
  \int_v^{1}p(v')\load_Y(v,v';A)dv'.
\end{equation*}
Of course, one can compute the expected value of any the measures proposed here, not just load.

\Figref{expload} shows the expected load for $A:=(1,1/x)$ ,
starting at valuation $1/2$,
where the expectation is taken over the $\betaDist(\alpha_1,\alpha_2)$ distributions~\cite{betadist},
where parameters $\alpha_1,\alpha_2$ range independently from 1 to 4.
Inspecting the figure shows that
symmetric distributions $\beta(\alpha,\alpha)$,
which are increasingly concentrated around $1/2$ as $\alpha$ grows,
yield decreasing loads as the next valuation becomes increasingly likely to be close to the current one.
By contrast, asymmetric distributions, which favor unbalanced valuations,
yield higher loads because the next valuation is likely to be farther from the current one.


\section{Sequential Composition}
\seclabel{sequential}
The \emph{sequential composition} of two AMMs
is constructed by using the output of one AMM as the input to the other.
(See Engel and Herlihy~\cite{EngelH2021} for a proof
that the sequential composition of two AMMs is an AMM.)
For example, if $A$ trades between florins and guilders,
and $B$ trades between guilders and francs,
then their sequential composition $A \oplus B$ trades between florins and francs.
A trader might deposit florins in $A$, receiving guilders,
then deposit those guilders in $B$, receiving francs.
In this section,
we investigate how linear slippage and divergence loss
interact with sequential composition.

Consider two AMMs $A = (x,f(x)),B = (y,g(y))$,
where $A$ trades between $X$ and $Y$,
and $B$ between $Y$ and $Z$.
If $A$ is in state $(a,f(a))$ and $B$ in state $(b,g(b))$
then their sequential composition is $A \oplus B := (x,h(x))$,
where $h(x)= g(b+f(a)-f(x))$~\cite{EngelH2021}.
(The sequential composition of more than two AMMs can be constructed by repeated two-way compositions.)

Let $\bv = (v_1,v_2,v_3)$ be the market valuation linking $X,Y,Z$,
inducing pairwise valuations
\begin{equation*}
  v_{12} = \frac{v_1}{v_1+v_2},\;
  v_{23} = \frac{v_2}{v_2+v_3},\;
  v_{13} = \frac{v_1}{v_1+v_3},\\
\end{equation*}
along with their vector forms
$
  \bv_{12} = (v_{12}, 1-v_{12}),\;
  \bv_{23} = (v_{23}, 1-v_{23}),\;
  \bv_{13} = (v_{13}, 1-v_{13})$.
Let $\bv' \neq \bv$ be a three-way valuation inducing 
analogous pair-wise valuations.
Let $\phi_A,\phi_B,\phi_{AB}: (0,1) \to \PosReals$ be the stable point maps for $A,B,A \oplus B$ respectively,
and $\Phi_A,\Phi_B,\Phi_{AB}: (0,1) \to \PosReals^2$ their vector forms.

Our composition rules apply when $A$ and $B$ start in
their respective stable states\footnote{
If $A$ and $B$ do not start in stable states for the current market valuation, 
then an arbitrage trader will eventually put them there.}
for a market valuation $\bv$:
$(a,f(a))$ is the stable state for $v_{12}$,
$(a+\delta,f(a+\delta))$ for $v_{12}'$.
$(b,g(b))$ for $v_{23}$,
and $(b+f(a)-f(a+\delta),g(b+f(a)-f(a+\delta))$ for $v_{23}'$.
We analyze the changes in divergence loss and linear slippage when
the market valuation changes from $\bv$ to $\bv'$.

\subsection{Divergence Loss}
Initially, the combined capitalization of $A$ and $B$ is $v_1 x + v_2(f(x) + y) + v_3 g(y)$.
A trader sends $\delta$ units to $A$,
reducing the combined capitalization by 
\begin{equation}
    \eqnlabel{divlossxy}
    -\delta v_1' + v_2'(f(x) - f(x + \delta)) = v_3'\divloss(v_{12},v_{12}'; A) 
\end{equation}
Next the trader sends the assets returned from the first trade to $B$,
reducing the combined capitalization by:
\begin{equation}
    \eqnlabel{divlossyz}
    v_2'(f(x + \delta) - f(x)) + v_3'(g(y) - g(y + f(x) - f(x + \delta))) = v_1' \divloss(v_{23},v_{23}'; B)
\end{equation}
Finally,
treating both trades as a single transaction reduces the combined capitalization by:
\begin{equation}\eqnlabel{divlossxz}
    -\delta v_1' + v_3'(h(x) - h(x + \delta))= v_2'\divloss(v_{13},v_{13}'; A \oplus B)
\end{equation}
Combining \eqnrange{divlossxy}{divlossxz} yields
\begin{equation}
\eqnlabel{divloss-seq}
\divloss(v_{13}, v_{13}'; A \oplus B)
  = \left( \frac{v_3'}{v_2'} \right) \divloss(v_{12}, v_{12}'; A) +
  \left( \frac{v_1'}{v_2'} \right) \divloss(v_{23}, v_{23}'; B).
\end{equation}
The effect of sequential composition on divergence loss is linear but not additive:
the divergence loss of the composition is a weighted sum
of the divergence losses of the components.

\subsection{Linear Slippage}
With respect to $\bv$,
a trader who sends $\delta$ units of $X$ to $A$ incurs the following slippage
\begin{equation}
 \eqnlabel{linslipxy}  
 v_2'(-\delta f'(x) + f(x + \delta) - f(x))
 = (v_1'+v_2')\linslip_X(v_{12}',v_{12}; A).
\end{equation}
Next the trader sends the assets returned from the first trade to $B$,
incurring the following slippage:
\begin{equation}
    \eqnlabel{linslipyz}
    v_3' \left(
(f(x)-(x+\delta))g'(y)
+ g(y+f(x)-(x+\delta))
- g(y)
\right)=
(v_2'+v_3') \linslip_X(v_{23},v_{23}'; B).
\end{equation}

Finally,
treating both trades as a single transaction yields slippage:
\begin{equation}
    \eqnlabel{linslipxz}
    v_3'(-\delta h'(x) + h(x + \delta) - h(x))
     (v_1'+v_3')\linslip_X(v_{13}',v_{13}, A \oplus B)
\end{equation}

Combining \eqnrange{linslipxy}{linslipxz} yields
\begin{equation}
\linslip_X(v_{13},v_{13}'; A \oplus B) =
\left(\frac{1-v_3'}{1-v_2'}\right)\linslip_X(v_{12},v_{12}'; A) +
\left(\frac{1-v_1'}{1-v_2'}\right)\linslip_X(v_{23},v_{23}'; B)
    \eqnlabel{linslip-seq}
\end{equation}

\subsection{Angular Slippage}
A trader sends $\delta$ to $A$,
where $f(a+\delta)$ is the stable point for $v_{12}'$,
$g(b+f(a)-f(a+\delta))$ the stable point for $v_{23}'$.
By construction,
\begin{gather*}
  h'(a) = -g'(b)f'(a) = -\frac{v_{23}}{1-v_{23}}\frac{v_{12}}{1-v_{12}}
  = -\frac{v_2}{v_3}\frac{v_1}{v_2} = -\frac{v_1}{v_3}\\
  h'(a+\delta) = -g'(b+f(a)-f(a+\delta))f'(a+\delta)
  =
  -\frac{v_{23}'}{1-v_{23}'}\frac{v_{12}'}{1-v_{12}'}
  =
  -\frac{v_2'}{v_3'}\frac{v_1'}{v_2'}
  =
  -\frac{v_1'}{v_3'}
\end{gather*}

Define $\theta_A(x),\theta_B(y),\theta_B(y)$ to be the 
respective angles of $f'(x)$, $g'(y)$, and $h'(x)$ with their $X$-axes.
We can express the tangents of the composite AMM's angles
in terms of the tangents of the component AMMs' angles.
\begin{gather*}
  \tan \theta_{AB}(x)
  = -\frac{1}{h'(x)}
  = -\frac{1}{-g'(b+f(a)-f(x))f'(x)}\\
  \tan \theta_{AB}(a)
  =\frac{v_3}{v_2}\frac{v_2}{v_1}
  =\frac{v_3}{v_1},\qquad
  \tan \theta_{AB}(a+\delta)
  =\frac{v_3'}{v_2'}\frac{v_2'}{v_1'}
  =\frac{v_3'}{v_1'}.
\end{gather*}
The component AMMs $A$ and $B$ determine the valuations $v_1,v_2,v_1',v_2'$,
which induce the remaining valuations $v_3,v_3',v_{12},v_{23},v_{13},v_{12}',v_{23}',v_{13}'$.
\begin{equation}
    \eqnlabel{angslip-seq}
    \angslip(v_{13},v_{13'}, A \oplus B) = \arctan \left(\frac{v_{13}-v_{13}'}{\bv_{13} \cdot \bv_{13}'}\right)
\end{equation}
It follows that the angular slippage of the sequential composition of two AMMs can be computed from the component AMMs' valuations.

\subsection{Load}
Combining \eqnref{divloss-seq} and \eqnref{linslip-seq} yields
\begin{align}
    \eqnlabel{load-seq}
    \load_X(v_{13},v_{13}'; A \oplus B) &= 
  \left( \frac{v_3'(1-v_3')}{v_2'(1-v_2')} \right) \load_X(v_{12}, v_{12}'; A)
  +
  \left( \frac{v_1'(1-v_1')}{v_2'(1-v_2')} \right) \load_X(v_{23}, v_{23}'; B)\\
  &\qquad+
  \left( \frac{v_3'(1-v_3')}{v_2'(1-v_2')} \right)
  \divloss(v_{12}, v_{12}'; A)
  \linslip_X(v_{23},v_{23}'; B)\notag\\
  &\qquad+
  \left( \frac{v_1'(1-v_1')}{v_2'(1-v_2')} \right)
  \divloss(v_{23}, v_{23}'; B)
  \linslip_X(v_{12},v_{12}'; A)\notag
\end{align}
It follows that the load of a sequential composition is a weighted sum of the loads of the components,
plus additional (strictly positive) cross-terms.

\section{Parallel Composition}
\seclabel{parallel}
Parallel composition~\cite{EngelH2021} arises when a trader is presented with two AMMs
$A:=(x,f(x))$ and $B:=(x,g(x))$,
both trading assets $X$ and $Y$,
and seeks to treat them as a single combined AMM $A||B$.
Let $\phi_A,\phi_B: (0,1) \to \PosReals$ be the stable point maps for $A,B$ respectively,
with $\Phi_A,\phi_B: (0,1) \to  \PosReals^2$ their vector forms,
and $\psi_A,\psi_B: \PosReals \to (0,1)$ their inverses.
(The parallel composition of more than two AMMs can be constructed by
repeated two-way compositions.)

As shown elsewhere~\cite{EngelH2021},
a trader who sends $\delta$ units of $X$ to the combined AMM
maximizes return by splitting those units between $A$ and $B$,
sending $t \delta$ to $A$ and $(1-t) \delta$ to $B$,
for $0 \leq t \leq 1$,
where
$f'(x + t\delta) = g'(y + (1-t)\delta)$.

We assume traders are rational, and always split trades in this way.
Because the derivatives are equal,
$x + t\delta$ and $y + (1-t)\delta$ are stable points
of $A$ and $B$ respectively for the same valuation $v'$,
so $x + t\delta = \phi_A(v')$, and $y + (1-t)\delta = \phi_B(v')$.
If $A$ is in state $(a,f(a))$ and $B$ in $(b,g(b))$,
then $A||B := (x,h(x))$, where
$h(x) := (f(a) - f(a+t x)+ g(b) - g(b+(1-t)x)$.
Our composition rules apply when both $A,B$ are in their stable states
for valuation $\bv = (v,1-v)$.
We analyze the change in divergence loss and linear slippage
when the common valuation changes from $\bv$ to $\bv'=(v',1-v')$.
The new valuation $\bv'$ may be the new market valuation,
or it may be the best the trader can reach with a fixed budget of $\delta$.

\subsection{Divergence Loss}
If a trader sends $\delta$ units of $X$ to $A || B$,
the combined capitalization suffers a loss of 
\begin{equation}
    \eqnlabel{divloss-par}
    \divloss(v,v';A||B) = \divloss(v,v';A) + \divloss(v,v';B).
\end{equation}
It follows that divergence loss under parallel composition is additive.
\subsection{Linear Slippage}
A straightforward calculation shows:
\begin{equation}
\eqnlabel{linslip-par}
    \linslip_X(v,v';A||B) = \linslip_X(v,v'; A) + \linslip_X(v,v'; B)
\end{equation}
Linear slippage is thus additive under parallel composition.

Linear slippage is also linear under scalar multiplication.
Any AMM $A:=(x,f(x))$ can be \emph{scaled} by a constant $\alpha  > 0$
yielding a distinct AMM $\alpha A := (\alpha x,\alpha f(x))$.
Let $(x,f(x))$ be the stable point for valuation $\bv$,
and $(x',f'(x'))$ the stable point for $\bv'$.
\begin{equation}
    \eqnlabel{linslip-scalar}
    \linslip_X(v,v';\alpha A) = \alpha \linslip_X(v,v';A).
\end{equation}

\subsection{Angular Slippage}
Because both $A$ and $B$ go from stable states for $v$
to stable states for $v'$,
\begin{gather*}
f'(a) = g'(a) = h'(a) = \frac{-v}{1-v},\quad
f'(a+t \delta) =g'(a+(1-t)\delta) = h'(a+\delta) = \frac{-v'}{1-v'}
\end{gather*}
It follows that
\begin{equation}
\eqnlabel{angslip-par}
  \angslip(v,v'; A||B) =
  \angslip(v,v'; A) =
  \angslip(v,v'; B).
\end{equation}

\subsection{Load}
Combining \eqnref{divloss-par} and \eqnref{linslip-par} yields
\begin{multline}
  \load_X(v,v'; A || B)
  =
  \load_X(v, v'; A)
  +
  \load_X(v, v'; B)\\
   \divloss(v, v'; A)
  \linslip_X(v,v'; B)
  \divloss(v, v'; B)
  \linslip_X(v,v'; A)
  \eqnlabel{load-par}
\end{multline}
It follows that the parallel composition's load
is the sum of the components' loads,
plus additional (strictly positive) cross-terms.

\section{Adaptive Strategies}
\seclabel{oracle}
So far we have proposed several ways to quantify the costs
associated with AMMs.
Now we turn our attention to strategies for adapting to
cost changes.
A complete analysis of adaptive AMM strategies is material for
another paper, so here we summarize two broad strategies
motivated by our proposed cost measures.
We focus on adjustments that might be executed automatically,
without demanding additional liquidity from providers.

\subsection{Change of Valuation}
Suppose an AMM learns, perhaps from a trusted Oracle service,
that its assets' market valuation has moved away from the AMM's current stable state,
leaving the providers exposed to substantial divergence loss.
Specifically,
suppose $A := (x,f(x))$ has valuation $v_1$ with stable state $(a_1,b_1)$,
when it learns that the market valuation has changed to $v_2$
with stable point $(a_2,b_2)$.

An arbitrage trader would move $A$ from $(a_1,b_1)$ to $(a_2,b_2)$,
pocketing a profit.
Informally,
$A$ can eliminate that divergence loss by ``pretending'' to conduct that arbitrage trade itself,
leaving the state the same, but moving the curve.
We call this strategy \emph{pseudo-arbitrage}.

$A$ changes its function using linear changes of variable in $x$ and $y$.
Suppose $a_1 > a_2$ and $b_2 > b_1$.
First, replace $x$ with $x-(a_1-a_2)$,
shifting the curve along the $X$-axis.
Next, replace $y$ with $y-(b_2-b_1)$,
shifting the curve along the $y$-axis.
The transformed AMM is now $A':= (x,f(x-(a_1-a_2))-(b_2-b_1))$.
The current state $(a_1,b_1)$ still lies on the shifted curve,
but now with slope $\frac{v_2}{v_2-1}$, matching the new valuation.
The advantage of this change is that $A$'s providers are no longer
exposed to divergence loss from the new market valuation.

The disadvantage is that pseudo-arbitrage produces AMMs that do not satisfy the usual boundary conditions 
$f(0) = \infty$ and $f(\infty) = 0$,
although they continue to satisfy the AMM axioms.
In practical terms, $A$ now has more units of $X$ than it needs,
but not enough units of $Y$ to cover all possible trades.
The AMM must refuse trades that would lower its $Y$ holdings below zero,
and there are $(a_1-a_2)$ units of $X$ inaccessible to the AMM.
The liquidity providers might withdraw this excess,
they might ``top up'' with more units of $Y$ to rebalance the pools,
or they might leave the extra balance to cover future pseudo-arbitrage changes.
(Note that $A$'s ability to conduct trades only while the valuation stays within a certain range is similar to Uniswap v3's ``concentrated liquidity'' option.)

\subsection{Change of Distribution}
Suppose an AMM's formula was initially chosen to match a predicted distribution on future valuations.
If that prediction changes,
then it may be possible to adjust the AMM's formula to match the new prediction.
Such an adjustment might be built into the AMM's smart contract,
or it could be imposed from outside by the liquidity providers.
The AMM's current function could be replaced with an alternative
that improves some expected cost measure,
say, reducing expected load or increasing expected capitalization.
But replacing AMM $A:=(x,f(x))$, in the stable state for the market valuation,
with another $\widetilde{A}:=(x,\widetilde{f}(x))$,
must follow certain common-sense rules.

First, any such replacement should not change the AMM's reserves:
if the AMM is in state $(a,f(a))$,
then the updated AMM is in state $(a,\widetilde{f}(a))$ where $f(a) = \widetilde{f}(a)$.
Adding or removing liquidity requires the active participation of the AMM's providers,
which can certainly happen,
but not as part of the kind of automatic strategy considered here.

Second,
any such replacement should not change the AMM's current exchange rate:
if the AMM is in state $(a,f(a))$,
then the updated AMM is in state $(a,\widetilde{f}(a))$ where $f'(a) = \widetilde{f}'(a)$.
To do otherwise invites further divergence loss.
If $(a,f(a))$ is the stable state for the current valuation,
and $f'(a) \neq \widetilde{f}'(a)$,
then $(a,\widetilde{f}(a))$ is not stable,
and a trader can make an arbitrage profit (and divergence loss)
by moving the AMM's state back to the stable state.

For example, an AMM's expected capitalization under distribution $p$ is 
\begin{equation*}
  \int_0^1 p(v) \bv \cdot \Phi(v) dv,
\end{equation*}
where $\Phi(v) = (\phi(v),f(\phi(v)))$.
If the distribution changes to $\widetilde{p}$,
then an adaptive strategy is to find a function $\widetilde{f}: \PosReals \to \PosReals$ with associated stable-point function
$\widetilde{\phi}: (0,1) \to \PosReals$ that optimizes
(or at least improves) the difference
\begin{equation*}
  \int_0^1 p(v) \bv \cdot \Phi(v) dv - \int_0^1 \widetilde{p}(v) \bv \cdot \widetilde{\Phi}(v)dv,
\end{equation*}
subject to boundary conditions $f(a) = \widetilde{f}(a)$ and $f'(a) = \widetilde{f'}(a)$,
where $\widetilde{\Phi}(v) = (\widetilde{\phi}(v),f(\widetilde{\phi}(v)))$.
Developing practical ways to find such functions is the subject of future work.

\section{Related Work}
\seclabel{related}
Today, the most popular automated market maker is
\emph{Uniswap}~\cite{uniswapv2,AngerisKCNC2019,uniswapv3,zhang2018},
a family of constant-product AMMs.
Originally trading between ERC-20 tokens and ether cryptocurrency,
later versions added direct trading between pairs of ERC-20 tokens,
and allowed liquidity providers
to restrict the range of prices in which their asset participate.
\emph{Bancor}~\cite{bancor} AMMs permit more flexible pricing schemes,
and later versions~\cite{bancorv2} include integration with external
``price oracles'' to keep prices in line with market conditions.
\emph{Balancer}~\cite{balancer} AMMs trade across more than two assets,
based on a \emph{constant mean} formula that generalizes constant product.
\emph{Curve}~\cite{curve} uses a custom curve specialized for trading
\emph{stablecoins} , maintaining low slippage and divergence loss
as long as the stablecoins trade at near-parity.
Pourpouneh \emph{et al.}~\cite{pourpouneh} provide a survey of current AMMs.

The formal model for AMMs used here,
including the axioms constraining AMM functions,
and notions of composition,
are taken from Engel and Herlihy~\cite{EngelH2021}.

Angeris and Chitra~\cite{AngerisC2020}
introduce a \emph{constant function market maker} model
and focus on conditions that ensure that agents who
interact with AMMs correctly report asset prices.

In \emph{event prediction markets}~\cite{AbernethyYV2011,ChenP2007,ChenW2010,Hanson2003,Hanson2007}, parties effective place bets on the outcomes of certain events, such as elections.
Event prediction AMMs differ from DeFi AMMs in important ways:
pricing models are different because
prediction outcome spaces are discrete rather than continuous,
prediction securities have finite lifetimes,
and composing AMMs is not a concern.

AMM curves resemble \emph{consumer utility curves} from classical economics ~\cite{micro},
and trader arbitrage resembles \emph{expenditure minimization}.
Despite some mathematical similarities,
there are fundamentally differences in application.
In particular,
traders interact with AMMs via composition,
an issue that does not arise in the consumer model.

Aoyagi~\cite{Aoyagi2020}
analyzes strategies for constant-product AMM liquidity providers
in the presence of ``noise'' trading, which is not intended to move prices,
and ``informed'' trading, intended to move the AMM to the stable point
for a new and more accurate valuation.

Angeris \emph{et al.}~\cite{AngerisEC2020}
propose an economic model relating how the curvature of the AMM's
function affects LP profitability in the presence of
informed and uninformed traders.

Bartoletti \emph{et al.}~\cite{BartolettiCL2021} give a formal semantics for a constant-product AMM expressed as a labeled transition system,
and formally verify a number of basic properties.

\bibliographystyle{plainurl}
\bibliography{references,zotero}

\newpage
\section{Appendix: Derivations of Equations}

\paragraph{\eqnref{linslipx}}
\begin{align*}
  \linslip_X(v,v'; A)
  &:= (1-v')\left(-\delta f'(x) + f(x + \delta) - f(x)\right) \\
  &= (1-v')\left(\delta \frac{v}{1-v} + f(x + \delta) - f(x)\right) \\
  &= \left(\frac{1-v'}{1-v}\right)\left(v\phi(v') - v\phi(v) + (1-v) f(\phi(v')) - (1-v) f(\phi(v))\right) \\
  &= \left(\frac{1-v'}{1-v}\right)\left( \bv \cdot \Phi(v') - \bv \cdot \Phi(v) \right) \\
  &= \left(\frac{1-v'}{1-v}\right) \divloss(v',v;A).
\end{align*}

\paragraph{\eqnref{angslip}}
\begin{align*}
    \angslip(v,v';A) 
    &= \theta(x') - \theta(x) \\
    &= \arctan\left (\frac{1}{-f'(x')} \right) - \arctan
    \left(\frac{1}{-f'(x)}\right) \\
    &= \arctan\left (\frac{1-v'}{v'}\right) - \arctan \left(\frac{1-v}{v} \right) \\
    &= \arctan\left (\frac{(\frac{1-v'}{v'}) - (\frac{1-v}{v})}{1 + (\frac{1-v'}{v'})(\frac{1-v}{v})}\right) \\
    &= \arctan\left(\frac{v-v'}{\bv \cdot \bv'}\right)
\end{align*}

\paragraph{\eqnref{divlossxy}}
\begin{align*}
    -\delta v_1' + v_2'(f(x) - f(x + \delta)) 
    &=v_1'(\phi_A(v_{12}) - \phi_A(v_{12}')) + v_2'(f(\phi_A(v_{12})) - f(\phi_A(v_{12}')) \\
    &=(v_1' + v_2')(\bv_{12}' \cdot \Phi_A(v_{12}) - \bv_{12}' \cdot \Phi_A(v_{12}')) \\
    &=(v_1' + v_2')\divloss(v_{12},v_{12}'; A) \\
    &= v_3'\divloss(v_{12},v_{12}'; A)
\end{align*}
\paragraph{\eqnref{divlossyz}}
\begin{align*}
    v_2'(f(x + \delta) - f(x)) &+ v_3'(g(y) - g(y + f(x) - f(x + \delta))) \\
    &=v_2'(\phi_B(v_{23}) - \phi_B(v_{23}')) + v_3'(g(\phi_B(v_{23})) - g(\phi_A(v_{23}'))\\
    &=(v_2' + v_3')(\bv_{23}' \cdot \Phi_B(v_{23}) - \bv_{23}' \cdot \Phi_B(v_{23}')) \\
    &=v_1' \divloss(v_{23},v_{23}'; B
\end{align*}

\paragraph{\eqnref{divlossxz}}
\begin{align*}
    -\delta v_1' + v_3'(h(x) - h(x + \delta))
    &=v_1'(\phi_{AB}(v_{13}) - \phi_{AB}(v_{13}')) +
    v_3'(h(\phi_{AB}(v_{13})) - h(\phi_{AB}(v_{13}'))\\
    &=(v_1' + v_3')((\bv_{13}' \cdot \Phi_{AB}(v_{13}) - \bv_{13}' \cdot \Phi_A(v_{13}')) \\
    &= v_2'\divloss(v_{13},v_{13}'; A \oplus B)
\end{align*}

\paragraph{\eqnref{divloss-seq}}
\begin{align*}
v_2' \divloss(v_{13}, v_{13}'; A \oplus B)
  &= v_3'\divloss(v_{12}, v_{12}'; A) + v_1' \divloss(v_{23},v_{23}'; B)\\
\divloss(v_{13}, v_{13}'; A \oplus B)
  &= \left( \frac{v_3'}{v_2'} \right) \divloss(v_{12}, v_{12}'; A) +
  \left( \frac{v_1'}{v_2'} \right) \divloss(v_{23}, v_{23}'; B).
\end{align*}

\paragraph{\eqnref{linslipxy}}
  \begin{align*}
  \linslip_X(v_{12},v_{12}'; A)
  &= \left(\frac{v_2'}{v_2}\right)
  \left( \frac{v_1+v_2}{v_1'+v_2'}\right)
  \left( \bv_{12} \cdot \Phi(v_{12}') - \bv_{12} \cdot \Phi(v_{12}) \right) \\
    v_2'(-\delta f'(x) + f(x + \delta) - f(x))
    &= v_2'\left(\frac{v_1}{v_2}(\phi_A(v_{12}') - \phi_A(v_{12})) + (f(\phi_A(v_{12}')) - f(\phi_A(v_{12}))\right) \\
    &= \frac{v_2'}{v_2}(v_1 + v_2)(\bv_{12} \cdot \Phi_A(v_{12}') - \bv_{12} \cdot \Phi_A(v_{12})) \\
    &= (v_1'+v_2')\linslip_X(v_{12}',v_{12}; A).
  \end{align*}
  
\paragraph{\eqnref{linslipyz}}
  \begin{align*}
    \linslip_X(v_{23},v_{23}'; B)
    &= \left(\frac{v_3'}{v_3}\right)
    \left( \frac{v_2+v_3}{v_2'+v_3'}\right)
    \left( \bv_{23} \cdot \Phi(v_{23}') - \bv_{23} \cdot \Phi(v_{23}) \right) \\
    &v_3' \left(
    (f(x)-(x+\delta))g'(y)
    + g(y+f(x)-(x+\delta))
    - g(y)
    \right)\\
    &=
    v_3' \left(
    (f(x)-(x+\delta))\frac{v_2}{v_3}
    + g(y+f(x)-(x+\delta))
    - g(y)
    \right)\\
    &=
    \frac{v_3'}{v_3}
    (v_2+v_3)
    \left(
    \bv_{23} \cdot \Phi_B(v_{23}')- \bv_{23} \cdot \Phi_B(v_{23}))
    \right)\\
    &=
    (v_2'+v_3') \linslip_X(v_{23},v_{23}'; B).
  \end{align*}
\paragraph{\eqnref{linslipxz}}
  \begin{align*}
    v_3'(-\delta h'(x) + h(x + \delta) - h(x))
    &= v_3'\left(\frac{v_1}{v_3}(\phi_A(v_{13}') - \phi_A(v_{13})) + (h(\phi_A(v_{13}')) - h(\phi_A(v_{13}))\right) \\
    &= \frac{v_3'}{v_3}(v_1 + v_3)(\bv_{13} \cdot \Phi_A(v_{13}') - \bv_{13} \cdot \Phi_A(v_{13})) \\
    &= (v_1'+v_3')\linslip_X(v_{13}',v_{13}, A \oplus B)
  \end{align*}

\paragraph{\eqnref{linslip-seq}}
  \begin{align*}
(v_1'+v_3')\linslip_X(v_{13},v_{13}'; A \oplus B)) &=
(v_1'+v_2')\linslip_X(v_{12},v_{12}' ; A) +
(v_2'+v_3')\linslip_X(v_{13},v_{13}' ; B)\\
\linslip_X(v_{13},v_{13}'; A \oplus B) &=
\left(\frac{v_1'+v_2'}{v_1'+v_3'}\right)\linslip_X(v_{12},v_{12}'; A) +
\left(\frac{v_2'+v_3'}{v_1'+v_3'}\right)\linslip_X(v_{23},v_{23}'; B)\\
&=
\left(\frac{1-v_3'}{1-v_2'}\right)\linslip_X(v_{12},v_{12}'; A) +
\left(\frac{1-v_1'}{1-v_2'}\right)\linslip_X(v_{23},v_{23}'; B)
  \end{align*}
  
\paragraph{\eqnref{linslip-seq}}
  \begin{align*}
  \angslip(v_{13},v_{13'}, A \oplus B)
  &= \arctan \left( \frac{v_3'}{v_1'} \right) - \arctan \left( \frac{v_3}{v_1} \right)\\
  &= \arctan \left(\frac{v_1 v_3'-v_1' v_3}{v_1 v_1'+v_3 v_3'}\right)\\
  &= \arctan \left(\frac{v_{13}-v_{13}'}{\bv_{13} \cdot \bv_{13}}\right)
  \end{align*}
  
\paragraph{\eqnref{load-seq}}
  \begin{align*}
  \load_X(v_{13},v_{13}'; A \oplus B)
  &=
  \divloss(v_{13},v_{13}'; A \oplus B) \linslip_X(v_{13},v_{13}'; A \oplus B) \\
  &=
  \left( \frac{v_3'(1-v_3')}{v_2'(1-v_2')} \right) \load_X(v_{12}, v_{12}'; A)
  +
  \left( \frac{v_1'(1-v_1')}{v_2'(1-v_2')} \right) \load_X(v_{23}, v_{23}'; B)\\
  &\qquad+
  \left( \frac{v_3'(1-v_3')}{v_2'(1-v_2')} \right)
  \divloss(v_{12}, v_{12}'; A)
  \linslip_X(v_{23},v_{23}'; B)\\
  &\qquad+
  \left( \frac{v_1'(1-v_1')}{v_2'(1-v_2')} \right)
  \divloss(v_{23}, v_{23}'; B)
  \linslip_X(v_{12},v_{12}'; A)\\
  \end{align*}

\paragraph{\eqnref{divloss-par}}
  \begin{align*}
\divloss(v,v';A||B)
    &=
    v'(-\delta) + (1-v')(f(x) - f(x + t\delta) + g(y) - g(y + (1-t)\delta)) \\
    &= -v'(x + t\delta - x + y + (1-t)\delta - y) \\
    &\qquad + (1-v')(f(\phi_A(v)) - f(\phi_A(v')) + g(\phi_B(v)) - g(\phi_B(v'))) \\ 
    &= -v'(\phi_A(v') - \phi_A(v) + \phi_B(v') - \phi_B(v))\\
    &\qquad + (1-v')(f(\phi_A(v)) - f(\phi_A(v')) + g(\phi_B(v)) - g(\phi_B(v'))) \\
    &= \bv' \cdot \phi_A(v) - \bv' \cdot \phi_A(v') + \bv' \cdot \phi_B(v) - \bv' \cdot \phi_B(v') \\
    &= \divloss(v,v';A) + \divloss(v,v';B)
  \end{align*}

\paragraph{\eqnref{linslip-par}}
  \begin{align*}
  h'(x) &= - (t f'(a+t x) + (1-t)g'(b+(1-t) x)\\
  &= - \left(t \frac{v}{(1-v)} + (1-t)\frac{v}{(1-v)} \right)\\
  &= - \frac{v}{(1-v)}  \\
    \delta &= t \delta + (1-t)\delta \\
    \delta &= \phi_A(v') - \phi_A(v) + \phi_B(v') - \phi_B(v) \\
    \linslip_X(v,v';A||B)
  &=
  -(1-v')( -\delta h'(x) + h(x+\delta) - h(x) )\notag\\
  &=
  -(1-v')\left( -\delta \frac{v}{1-v} + h(x+\delta) - h(x) \right)\notag\\
  &=
  \left(\frac{1-v'}{1-v}\right)
  (\delta v - (1-v)(h(x+\delta) - h(x)))\notag\\
  &=
  \left(\frac{1-v'}{1-v}\right)
  (\delta v - (1-v)(
  f(a) - f(a+t(x+\delta))+ g(b) - g(b+(1-t)(x+\delta))\notag\\
  &\qquad\qquad - (f(a) - f(a+t x)+ g(b) - g(b+(1-t)x))))\notag\\
  &=
  \left(\frac{1-v'}{1-v}\right)
  (\delta v + (1-v)(
  f(a+t(x+\delta)) - f(a+t x)\notag\\
  &\qquad\qquad+ g(b+(1-t)(x+\delta))- g(b+(1-t)x)))\notag\\
  &=
  \left(\frac{1-v'}{1-v}\right)
  (\delta v + (1-v)(
  f(\phi_A(v')) - f(\phi_A(v)) + g(\phi_B(v'))- g(\phi_B(v))))\notag\\
  &=
  \left(\frac{1-v'}{1-v}\right)
  (v(\phi_A(v') - \phi_A(v) + \phi_B(v') - \phi_B(v))\notag\\
  &\qquad\qquad+ (1-v)(f(\phi_A(v')) - f(\phi_A(v)) + g(\phi_B(v'))- g(\phi_B(v))))\notag\\
  &=
  \left(\frac{1-v'}{1-v}\right)
  (v(\phi_A(v') - \phi_A(v))
  + (1-v)(f(\phi_A(v')) - f(\phi_A(v)))\notag\\
  &\qquad\qquad+ v(\phi_B(v') - \phi_B(v)) + (1-v)(g(\phi_B(v'))- g(\phi_B(v))))\notag\\
  &=
  \left(\frac{1-v'}{1-v}\right)
  (\bv \cdot \Phi_A(v') - \bv \cdot \Phi_A(v'))
  + (\bv \cdot \Phi_B(v') - \bv \cdot \Phi_B(v'))\notag\\
  &=
  \linslip_X(v,v'; A) + \linslip_X(v,v'; B).
  \end{align*}
  
\paragraph{\eqnref{linslip-scalar}}
  \begin{align*}
    \linslip_X(v,v';\alpha A) 
    &= \bv' \cdot (\alpha x,\alpha f(x)) - \bv' \cdot (\alpha x',\alpha f(x')) \\
    &= \alpha(\bv' \cdot \Phi(v) - \bv' \cdot \Phi(v') \\
    &= \alpha \linslip_X(v,v';A).
  \end{align*}

\section{Appendix: Proofs}

\paragraph{\thmref{max-cap}}

\begin{proof}
    Let $h(v) = \bv \cdot \Phi(v) = v\phi(v) + (1-v)f(\phi(v))$ be the capitilization at $v$.
    Note that $h$ is a continuous function on the compact set $[0,1]$ which guarantees the existence of the maximum.
    Let $v^{*}$ be the point where the maximum occurs.
    The first derivative is
    \begin{align*}
        &h'(v) = \phi(v) + v\phi'(v) + (1-v)f'(\phi(v))\phi'(v) - f(\phi(v)) \\
        &\phi(v) + v\phi'(v) + (1-v)\frac{-v}{1 -v}\phi'(v) - f(\phi(v)) \\
        &= \phi(v) - f(\phi(v))
    \end{align*}
    The second derivative is
    \begin{align*}
        &h''(v) = \phi'(v) - f'(\phi(v))\phi'(v) \\
        &=\phi'(v)(1 - \frac{-v}{1-v}) \\
        &= \phi'(v)(\frac{1 - v + v}{1 -v}) = \frac{\phi'(v)}{1-v}
    \end{align*}
    
    Now take a derivative with respect to $v$ of
    \begin{align*}
        f'(\phi(v)) &= \frac{-v}{1-v}\\
        \phi'(v) &= \frac{-1}{(1-v)^2f''(\phi(v))} < 0.
    \end{align*}
    Recall that $f$ is strictly convex, so $f''(\phi(v)) > 0$ for all $v \in (0,1)$.
    
    We can then write the second derivative of the capitalization as
    \begin{align*}
        h''(v) = \frac{-1}{(1-v)^3f''(\phi(v))} < 0
    \end{align*}
    
    Thus $h$ is strictly concave so the maximum is unique.
    Finally, the first-order conditions tell us that $h'(v^{*}) = 0$ or $\phi(v^{*}) = f(\phi(v^{*}))$.
    
\end{proof}

\paragraph{\lemmaref{sym-curve}}
\begin{proof}
    Let $g = f^{-1} = f$.
    \begin{align*}
        f'(y) = g'(y) = \frac{1}{f'(f^{-1}(y))} 
        =\frac{1}{f'(f(y))}
        = \frac{1}{f'(x)} = - \frac{(1 -v)}{v}
        = \frac{-(1 -v)}{1 - (1-v)}
    \end{align*}
    Thus $y = \phi(1 -v)$.
\end{proof}

\paragraph{\thmref{sym-max-cap}}

\begin{proof}
    From the proof of \thmref{max-cap} we know $\phi'(v) < 0$ for $v \in (0,1)$, so $\phi(v)$ is strictly decreasing.
    Applying \thmref{max-cap} and \lemmaref{sym-curve} tells us that capitalization is maximized when $\phi(v) = x = y = \phi(1-v)$.
    Because $\phi$ is strictly decreasing, the only way $\phi(v) = \phi(1-v)$ is if $v = 1 -v$ or $v = \frac{1}{2}$.
\end{proof}

\section{Appendix: Minimizing Divergence Loss Exposure}
\seclabel{min-divcap}
Let $A:=(x,f(x))$ be an AMM currently in state $(a,f(a))$, the stable state for $(v,1-v)$.
Define an \emph{$X$-partition} to be a sequence of
values $\{x_i\}_{i=1}^\infty$ in $\PosReals$ such that 
$x_1 = a$, $x_i < x_{i+1}$, and $\lim_{n \to \infty}x_n = \infty$.

Given a partition $P_X = \{x_i\}_{i=1}^n$,
define the \emph{total divergence loss} with respect to that partition as
\begin{align*}
    \divloss(P_X;A) = \sum_{i=1}^\infty \divloss^*(x_i,x_{i+1};A)
\end{align*}

Writing this out explicitly gives
\begin{align*}
    &\sum_{i=1}^\infty \divloss^*(x_i,x_{i+1};A) \\
    &= \sum_{i=1}^\infty \bv_{i+1} \cdot (\Phi(\bv_i) - \Phi(\bv_{i+1)}) \\
    &= \sum_{i=1}^\infty \bv_{i+1} \cdot ((\phi(v_i),f(\phi(v_i))) - (\phi(v_{i+1}),f(\phi(v_{i+1})))) \\
    &= \sum_{i=1}^\infty (v_{i+1},1-v_{i+1}) \cdot (\phi(v_i) - \phi(v_{i+1}),f(\phi(v_i)) - f(\phi(v_{i+1}))) \\
    &= \sum_{i=1}^\infty v_{i+1}(\phi(v_i) - \phi(v_{i+1})) - v_{i+1}(f(\phi(v_i)) - f(\phi(v_{i+1}))) +\sum_{i=1}^\infty f(\phi(v_i)) - f(\phi(v_{i+1})) \\
    &= \sum_{i=1}^\infty v_{i+1}(\phi(v_i) - \phi(v_{i+1})) - v_{i+1}(f(\phi(v_i)) - f(\phi(v_{i+1}))) +f(\phi(v_1)) \\
    &= \sum_{i=1}^\infty v_{i+1}[\phi(v_i) - \phi(v_{i+1}) + f(\phi(v_{i+1})) - f(\phi(v_i))] +f(\phi(v_1)) \\
\end{align*}
Note that each term $\phi(v_i) - \phi(v_{i+1}) + f(\phi(v_{i+1})) - f(\phi(v_i)) \leq 0$.
We also know that $v_1 \geq v_{i+1}$ for each $i$.
This gives us the upper bound
\begin{equation*}
    \sum_{i=1}^\infty v_{i+1}[\phi(v_i) - \phi(v_{i+1}) + f(\phi(v_{i+1})) - f(\phi(v_i))] +f(\phi(v_1))
    \leq f(\phi(v_1)) = f(x)
\end{equation*}
and the lower bound
\begin{align*}
    &\sum_{i=1}^\infty v_{i+1}[\phi(v_i) - \phi(v_{i+1}) + f(\phi(v_{i+1})) - f(\phi(v_i))] +f(\phi(v_1)) \\
    &\geq v_1 \sum_{i=1}^\infty [\phi(v_i) - \phi(v_{i+1}) + f(\phi(v_{i+1})) - f(\phi(v_i))] +f(\phi(v_1)) \\
    &= v_1[\phi(v_1) - f(\phi(v_1))] + f(\phi(v_1)) \\
    &= v_1\phi(v_1) + (1-v_1)f(\phi(v_1)) \\
    &= \capital(v;A)
\end{align*}

Simply put
\begin{align*}
    \capital(v;A) \leq \divloss(P_X;A) \leq f(a)
\end{align*}

How tight can this lower bound get?
Well, let $v^{*}$ be the point where $\capital(v;A)$ is maximized.
If we let $x^{*} = \phi(v^{*})$, then we know that $f(x^{*}) = x^{*}$.
We know that $\capital(v;A) = v^{*}x^{*} + (1-v^{*})f(x^{*}) = f(x^{*})$.
But if $x = x^{*}$ then this means
\begin{align*}
        f(x^{*}) \leq \divloss(P;A) \leq f(x^{*})
\end{align*}
which is entirely independent of the chosen partition $P_X$, so
\begin{align*}
        f(x^{*}) \leq \divloss(A) \leq f(x^{*})
\end{align*}
or $\divloss(A) = f(x^{*})$.
Additionally, total loss is conserved even if we modify the AMM $A$ for $x > x^{*}$.
No matter how you choose to drain asset type $Y$ by depositing asset type $X$,
in the end you will drain all of $f(x^{*})$ if you deposit an infinite amount of $X$.

We get a similar result when trading along the $Y$-axis.
Define a \emph{$Y$-partition} to be a sequence of elements $\set{x_i}_{i=1}^\infty$ in $\PosReals$ such that $y_1 = f(a)$, $y_i \leq y_{i+1}$, and $\lim_{n \to \infty}y_n = \infty$.

Let $P_Y = \set{y_i}_{i=1}^{\infty}$ be a Y-partition, 
$g = f^{-1}(x)$.
and $y_i = f(x_i)$ so $x_i = f^{-1}(y_i) = g(y_i)$.
Thus $\Phi(\bv_i) = (x_i,f(x_i)) = (g(y_i),f \circ g(y_i))$ where $f \circ g(y_i) = f \circ f^{-1} (y_i) = y_i$.
That is, $\Phi(\bv_i) = (g(y_i),y_i)$.
The total cost with respect to this partition is
\begin{equation*}
    \divloss(P_Y;A) = \sum_{i=1}^\infty \divloss^*(y_i,y_{i+1};A)
\end{equation*}
For symmetry, define $v_i' = 1- v_i$.
\begin{align*}
    \sum_{i=1}^\infty \divloss^*(y_i,y_{i+1};A) \\
    &= \sum_{i=1}^\infty \bv_{i+1} \cdot (\Phi(\bv_i) - \Phi(\bv_{i+1)}) \\
    &= \sum_{i=1}^\infty \bv_{i+1} \cdot (g(y_i) - g(y_{i+1}),y_i - y_{i+1})  \\
    &=\sum_{i=1}^\infty (1-v_{i+1}')(g(y_i) - g(y_{i+1}) + v_{i+1}'(y_i - y_{i+1}) \\
    &=\sum_{i=1}^\infty v_{i+1}'[y_i - y_{i+1} +   g(y_{i+1}) - g(y_i)] + \sum_{i=1}^\infty (g(y_i) - g(y_{i+1}) \\
    &=\sum_{i=1}^\infty v_{i+1}'[y_i - y_{i+1} +   g(y_{i+1}) - g(y_i)] + a
\end{align*}
Note that $g(y_i) - g(y_{i+1})) + y_{i+1} - y_i \leq 0$ and $v_1' \geq v_{i+1}'$ for each $i$.
We now get the upper bound
\begin{align*}
    \sum_{i=1}^\infty v_{i+1}'[y_i - y_{i+1} +   g(y_{i+1}) - g(y_i)] + a \\
    &\leq a
\end{align*}
and the lower bound
\begin{align*}
    \sum_{i=1}^\infty v_{i+1}'[y_i - y_{i+1} +   g(y_{i+1}) - g(y_i)] + a \\
    &\geq v_1'\sum_{i=1}^\infty y_i - y_{i+1} +   g(y_{i+1}) - g(y_i) + a \\
    &= v_1'(y_1 - g(y_1)) + a \\
    &= (1-v_1)(f(a) - a) + a \\
    &= (1-v_1)f(a) + v_1a = \capital(v;A) 
\end{align*}
So again we get the bounds
\begin{align*}
    \capital(v;A) \leq \divloss(P_Y;A) \leq a
\end{align*}

Similar to the $X$-axis case this inequality is tight if $y_1 = f(x^{*})$ and $\divloss(P_Y;A) = a$.
Thus we do get a loss conservation result if we start at $(x^{*},f(x^{*}))$.
Namely
\begin{equation*}
    \divloss(P_X;A) + \divloss(P_Y;A) = x^{*} + f(x^{*}) = 2x^{*} = 2\capital(v^{*};A)  
\end{equation*}
The valuation $\bv^{*}$ thus corresponds to the AMM state where
half of the wealth may be lost to $X$ trades and half can be lost to $Y$ trades.

\section{Mathematica Code}
\seclabel{mathematica}
This section shows the Mathematica scripts used to generate \figref{expload}.
\begin{lstlisting}
g[x_] := InverseFunction[f][x];
\[Phi][v_] := InverseFunction[f'][-v/(1-v)];
divloss[v1_,v2_] :=
    v2 \[Phi][v1] + (1-v2) f[\[Phi][v1]] - (v2 \[Phi][v2] + (1-v2) f[\[Phi][v2]]);

linslipx[v1_,v2_] :=
    ((1-v2)/(1-v1)) (v1 \[Phi][v2] - v1 \[Phi][v1]
	+ (1-v1) f[\[Phi][v2]] - (1-v1) f[\[Phi][v1]]);

linslipy[v1_,v2_] :=
    (v2/v1) ((1-v1) f[\[Phi][v2]] - (1-v1) f[\[Phi][v1]]
	+ v1 \[Phi][v2] - v1 \[Phi][v1]);

loadx[v1_,v2_] := divload[v1,v2] linslipx[v1,v2];
loady[v1_,v2_] := divloss[v1,v2] linslipy[v1,v2];

exploadx[v_,\[Alpha]1_,\[Alpha]2_] :=
    PDF[BetaDistribution[\[Alpha]1, \[Alpha]2]][v] loadx[1/2,v];
exploady[v_,\[Alpha]1_,\[Alpha]2_] :=
    PDF[BetaDistribution[\[Alpha]1, \[Alpha]2]][v] loady[1/2,v];
expload[\[Alpha]1_, \[Alpha]2_] := 
 NIntegrate[exploadx[v, \[Alpha]1, \[Alpha]2], {v, 0, 1/2}]  + 
  NIntegrate[exploady[v, \[Alpha]1, \[Alpha]2], {v, 1/2, 1}]

expslipx[v_,\[Alpha]1_,\[Alpha]2_] :=
    PDF[BetaDistribution[\[Alpha]1, \[Alpha]2]][v] linslipx[1/2,v];
expslipy[v_,\[Alpha]1_,\[Alpha]2_] :=
    PDF[BetaDistribution[\[Alpha]1, \[Alpha]2]][v] linslipy[1/2,v];
expslip[\[Alpha]1_, \[Alpha]2_] := 
 NIntegrate[expslipx[v, \[Alpha]1, \[Alpha]2], {v, 0, 1/2}]  + 
  NIntegrate[expslipy[v, \[Alpha]1, \[Alpha]2], {v, 1/2, 1}]
  
exploadx[v_,\[Alpha]1_,\[Alpha]2_] :=
    PDF[BetaDistribution[\[Alpha]1, \[Alpha]2]][v] divloss[1/2,v];
exploady[v_,\[Alpha]1_,\[Alpha]2_] :=
    PDF[BetaDistribution[\[Alpha]1, \[Alpha]2]][v] divloss[1/2,v];
expload[\[Alpha]1_, \[Alpha]2_] := 
 NIntegrate[exploadx[v, \[Alpha]1, \[Alpha]2], {v, 0, 1/2}]  + 
  NIntegrate[exploady[v, \[Alpha]1, \[Alpha]2], {v, 1/2, 1}]

Plot3D[expload[\[Alpha]1, \[Alpha]2], {\[Alpha]1, 1, 4}, {\[Alpha]2, 
  1, 4}, PlotRange -> All, MeshFunctions -> {#3 &}]

\end{lstlisting}
\end{document}